\documentclass[11pt]{article}
\usepackage{curves}
\usepackage{url}
\usepackage[dvips]{graphicx}
\usepackage{amssymb,amsfonts,amsmath,amsthm,amscd,dsfont,mathrsfs,bbold}
\usepackage{graphicx,float,psfrag,epsfig,color}

\textwidth  6.5in
\textheight 8.5in

\topmargin -0.25in
\oddsidemargin 0.in \evensidemargin 0.in

\newcommand\independent{\protect\mathpalette{\protect\independent}{\perp}} 
\def\independent#1#2{\mathrel{\rlap{$#1#2$}\mkern2mu{#1#2}}}

\newcommand{\iid}{\stackrel{\text{iid}}{\sim}}

\newcommand{\F}{\mathbb{F}}

\newcommand{\mZ}{\mathbb{Z}}

\newcommand{\pp}{\mathbb{P}}
\newcommand{\E}{\mathbb{E}}

\newcommand{\e}{\varepsilon}

\newcommand{\rank}{\mathrm{rank}}
\newcommand{\nullity}{\mathrm{nullity}}

\newcommand{\X}{\mathcal{X}}
\newcommand{\Y}{\mathcal{Y}}

\theoremstyle{definition}
\newtheorem{definition}{Definition}
\theoremstyle{plain}
\newtheorem{thm}{Theorem}
\theoremstyle{plain}
\newtheorem{prop}{Proposition}
\theoremstyle{plain}
\newtheorem{lemma}{Lemma}
\theoremstyle{plain}
\newtheorem{corol}{Corollary}
\theoremstyle{plain}
\theoremstyle{remark}
\newtheorem{remark}{Remark}
\theoremstyle{discussion}
\theoremstyle{plain}


\hyphenation{op-tical net-works semi-conduc-tor}

\begin{document}

\title{
Randomness and dependencies extraction via polarization, with applications to Slepian-Wolf coding and secrecy
}

\author{Emmanuel Abbe\thanks{This paper was presented at ITA 2010 \cite{corr}, while the secrecy application was presented at the Allerton Conference on Communication, Control and Computing, Monticello, October 2011. The author is now with the Electrical Engineering Department and Program in Applied and Computational Mathematics at Princeton University, Princeton, USA.}\\
Ecole Polytechnique F\'ed\'erale de Lausanne 
}



\date{}
\maketitle

\begin{abstract}
  \boldmath
The polarization phenomenon for a single source is extended to a framework with multiple correlated sources.
It is shown in addition to extracting the randomness of the source, the polar transforms takes the original arbitrary dependencies to extremal dependencies.   
Polar coding schemes for the Slepian-Wolf problem and for secret key generations are then proposed based on this phenomenon. In particular, constructions of secret keys achieving the secrecy capacity and compression schemes achieving the Slepian-Wolf capacity region are obtained with a complexity of $O(n \log (n))$.


\end{abstract}


\section{Introduction}

The polarization technique has been developed in \cite{ari} for channel coding and then in \cite{ari3} for source coding. 
The codes resulting from this technique, called polar codes, have several desirable attributes: (1) they have low encoding and decoding complexity (2) bounds on the error probability which are exponential in the square root of the block length are obtained (3) they allow to achieve the Shannon capacity on symmetric discrete memoryless channels.

One of the key results in the development of polar codes is the following `polarization phenomenon'. 
For a matrix $A$, we denote by $A^{\otimes k}$ the matrix obtained by taking $k$ Kronecker products of $A$ with itself. 
\begin{thm}\label{thmari}\cite{ari,ari3}
Let $X^n=[X_1,\dots,X_n]$ be i.i.d. Bernoulli($p$), $n$ be a power of 2, and $Y^n=X^n G_n$, where $G_n=    \bigl[\begin{smallmatrix} 
      1 & 0 \\
      1 & 1 \\
   \end{smallmatrix}\bigr]^{\otimes \log_2(n)}$ and where the addition is modulo 2. Then, for any $\e \in (0,1)$, 
\begin{align}
&|\{j \in \{1,\dots,n\}: H(Y_j | Y^{j-1}) \in (\e,1-\e) \}|  =o(n). \label{polar}
\end{align}
\end{thm}
Here $H(Y_j | Y^{j-1})$ denotes the conditional Shannon entropy of $Y_j$ given the previous components $Y^{j-1}=[Y_1,\dots,Y_{j-1}]$. 
Note that \eqref{polar} implies that, besides for a vanishing fraction, all conditional entropies $H(Y_j | Y^{j-1})$ tend to either 0 and 1. This explains the name of `polarization phenomenon'. Moreover, since $G_n$ is invertible, the total entropy in $X^n$ and $Y^n$ is the same, 
\begin{align}
nH(p)= H(X^n)=H(Y^n),
\end{align}
and
\begin{align}
&\frac{1}{n} |\{j \in \{1,\dots,n\}: H(Y_j | Y^{j-1}) \geq 1-\e \}|  \stackrel{n \to \infty}{\longrightarrow}  H(p), \label{polar1}\\
&\frac{1}{n} |\{j \in \{1,\dots,n\}: H(Y_j | Y^{j-1}) \leq \e \}|  \stackrel{n \to \infty}{\longrightarrow}  1-H(p), \label{polar2}
\end{align}
where $H(p)$ is the entropy of the Bernoulli($p$) distribution. 


In view of previous result, the transform $G_n$ has a clear advantage for compressing the source $X^n$: by mapping $X^n$ to $Y^n$ the compression is done by storing only the components with high conditional entropy (or more precisely with non-low conditional entropy) and the resulting compression rate is optimal since it is the entropy. 
The probability of decoding wrongly a bit which has not been stored, using the stored bits, can be seen to be at most its conditional entropy, i.e., at most $\e$. See Section \ref{rand-comp} for details about this.  
Since it shown in \cite{ari3} that $\e$ can be taken exponentially small in the square root of the blocklength while preserving the polarization theorem above, error propagation can be prevented with a successive decoding algorithm, and the above gives a linear source code with a structured matrix $G_n$ for which the encoding and decoding complexity in $O(n \log n)$ \cite{ari}. 

In addition to the above, the polarization technique has proved to admit generalizations to several multi-user problems.
In \cite{2mac,mmac} a generalization is proposed for multiple access channels.
The polar coding technique is also used in \cite{relay} for relay channels and in \cite{korada,bc} for broadcast channels. It is used in \cite{hof,gamal,rathi,vardy} for the wire-tap channel, ensuring weak secrecy, besides for noiseless main channels where strong secrecy can be achieved \cite{vardy}. In \cite{korada,ari3,corr} the Slepian-Wolf coding problem is considered with polar codes. The approach of \cite{korada,ari3} is based on the onion-pealing decoding of the users, while in \cite{corr} joint decoding of the multiple users' inputs. The joint decoding approach was also used in \cite{2mac,mmac} for the MAC. 
In subsequent works, generalizations are obtained for non binary alphabets \cite{sasoglu} and for non corner points of the rate region in  \cite{arinew}. 

This paper develops the results presented first by the author in \cite{corr} and provides polar coding schemes for the $m$-user Slepian-Wolf coding problem. In this setting, multiple users have a source sequence and the goal is to compress the multiple sources in a distributed manner while exploiting the correlation among the sources (details about this problem are provided in Section \ref{compression}). 
The key step to establish polar coding schemes for the Slepian-Wolf problem is to extend Theorem \ref{thmari} to a setting where there is not a single source sequence $X^n$, but multiple sources sequences $X^n[i]$ for a set of users indexed by $i \in \{1,\dots,m\}$. There are different strategies for that purpose: one may treat each user sequentially, this is the onion-pealing approach which leads to corner points of the achievable rate region, or jointly, to reach rates on the dominant face of the achievable rate region.  The onion-pealing approach is developed in \cite{korada,ari3}, this paper provides the joint approach. 
The paper then expands the technique to the secret key generation problem in the multi-terminal setting with no side-information, which has not been considered with neither the onion-pealing nor joint approach so far\footnote{Preliminary results were presented in \cite{allerton}}. It is shown that the resulting coding schemes retain the desirable properties of polar codes. For the Slepian-Wolf (SW) coding problem, capacity-boundary rates are achieved with low-complexity complexity if the number of users is fixed. 
While in \cite{corr} for the SW problem and in \cite{2mac,mmac} for the MAC problem, the joint approach typically leads to a gap in the achievable rate region, a condition on the source distribution (separability) is introduced in this paper to ensure achievability of individual rates without gap. For the secret key generation problem, two protocols are presented affording weak secrecy, which can be converted to strong secrecy by relying on a random seed of size $o(n)$, at the optimal secrecy rate and with $O(n \log (n))$ complexity. 
The technique is also shown to extend to secret key agreement with more than two parties. In each of these applications, the proposed coding schemes are among the first to reach the information theoretic limits with $O(n \log (n))$ encoding and decoding complexity.

We introduce next some notation:
\begin{itemize}
\item $[n]=\{1,2,\dots,n\}$
\item $\F_2$ denotes the binary field ($GF(2)$)
\item For $x \in \F_2^k$ and $S \subseteq [k]$, $x[S]=\{x_i : i \in S\}$
\item For $x \in \F_2^k$, $x^{i}=[x_1,\dots,x_{i}]$
\item $\mZ \pm \e$ denotes the set of real numbers within distance $\e$ of an integer, and $\mZ^3 \pm \e$ the set of 3-dimensional real vectors within distance $\e$ of a 3-dimensional integral vector  
\item For a probability distribution $\mu$ on $\F_2^m$ and $S \subseteq [m]$, $\mu_S$ denotes the marginal of $\mu$ on $S$.
\end{itemize}

\subsection{Comment on entropy extraction}\label{rand-comp}
We discuss here the implication of extracting the Shannon entropy in a random source, with respect to reconstructing the source. In particular, this explains why the polarization phenomenon in Theorem \ref{thmari} implies a lossless source code. 

We first define the min-entropy. 
\begin{definition}
Let $X,Y$ be random variables taking values in the discrete sets $\X,\Y$ respectively. The conditional min-entropy of $X$ given $Y$ is given by 
\begin{align}
H_\infty(X|Y) = \sum_{y \in Y} H(X|Y=y) \Pr\{Y=y\},
\end{align}
where 
\begin{align}
H_\infty(X|Y=y) = \min_{x \in \X} \log \frac{1}{\Pr\{X=x|Y=y\}}.
\end{align}
\end{definition}
Note that $H_\infty(X|Y=y) \leq H(X|Y=y)$, hence $H_\infty(X|Y) \leq H(X|Y)$, where $H$ denotes the Shannon-entropy. 

Define next a $(\mu,\e)$-compressor as follows. 
\begin{definition}
Let $\mu$ be a probability distribution on $\F_2^n$ and let $X\sim \mu$. A deterministic map $f: \F_2^n \to \F_2^m$ is a $(\mu,\e)$-compressor if $H_\infty(X|f(X)) \leq \e$. A linear compressor is a compressor for which $f$ is linear. 
\end{definition}
A compressor preserves all the information about $X$ in $f(X)$ in the following sense.  
\begin{lemma}\label{comp-lemma}
$f: \F_2^n \to \F_2^m$ is a $(\mu,\e)$-compressor then there exists an algorithm recovering $X\sim \mu$ from $f(X)$ with average error probability at most $1-e^{-\e}=\e + o(\e)$. Conversely, if for $f: \F_2^n \to \F_2^m$ there exists an algorithm recovering $X\sim \mu$ from $f(X)$ with average error probability at most $\e$, then $f$ is a $(\mu,\delta)$-compressor with $\delta=\sqrt{\e m}+ o(\sqrt{\e m})$. 
\end{lemma}
The proof of this lemma is given in Section \ref{proofA}. 

Consider now the map $g: \F_2^n \to \F_2^m$ given by $G_n[D_{\e,n}(p)^c,\cdot]$ (i.e., keep the rows in $G_n$ indexed by $[n] \setminus D_{\e,n}(p)$), where 
\begin{align}
D_{\e,n}(p) := \{i \in [n] : H(Y_i|Y^{i-1}) \leq \e \},
\end{align}
and where $Y^n=G_n X^n$, with $X^n$ i.i.d.\ Bernoulli$(p)$ as in previous section. Then Theorem \ref{thmari} says that $g$ is a $\e n$-compressor for an i.i.d.\ Bernoulli source, indeed,
\begin{align}
H(X^n|G_n[D_{\e,n}(p)^c,\cdot] X^n) &= H(Y^n| Y^n[D_{\e,n}(p)^c])\\
&=H(Y^n[D_{\e,n}(p)] | Y^n[D_{\e,n}(p)^c]) \\
& \leq \sum_{i \in D_{\e,n}(p)} H(Y_i | Y^{i-1}) \\
& \leq \e n.
\end{align}
Hence, if $\e=o(1/n)$ (and in fact $\e$ can be taken exponentially small for polar codes), we have that $G_n[D_{\e,n}(p)^c,\cdot]$ is a $o(1)$-compressor and by Lemma \ref{comp-lemma}, $X^n$ can be recovered with probability $1-o(1)$. In the case of polar codes,  $X^n$ can be recovered with a successive decoding algorithms that requires only $O(n \log(n))$ computations, as opposed to MAP (as used in the proof of the Lemma) which is a priori of exponential complexity. 

Note also that in the case of an i.i.d.\ Bernoulli$(p)$ source, the least dimension $m$ to obtain a $o(1)$-compressor is $m=nH(p)+o(n)$, i.e., the compressor must extract all the Shannon entropy. Note that the only source with binary components whose dimension is equal to its entropy is a uniformly distributed source, while here the compressed source $f(X)$ must have dimension matching approximately its total entropy, up to a $o(n)$. Hence $f$ is not necessarily an extractor as $f(X)$ may still contain some components that are not close to uniformly distributed. For example, in the polar construction, $G_n[D_{\e,n}(p)^c,\cdot] X^n$ contains also the ``moderate'' components that have conditional entropy $H(Y_i|Y^{i-1}) \in (\e,1-\e)$, and are hence non-uniform. However, modulo removing a vanishing fraction of components, $f$ can be made an extractor, e.g., $G_n[R_{\e,n}(p),\cdot]$ is an extractor for $R_{\e,n}(p) := \{i \in [n] : H(Y_i|Y^{i-1}) \geq 1-\e \}$, as shown in \cite{extract}.


\subsection{Comment on the use of the bit-reversal matrix}
The transformation used \cite{ari,ari3} is not $G_n$ but $ G_n B_n$, where $B_n$ is the bit-reversal permutation (i.e., the permutation that maps each element of $j \in \{1,\dots,n\}$ to the element $\tilde{j}$ obtained by reversing the binary expansion\footnote{One should consider here $j$ in $\{0,\dots,n-1\}$ rather that $\{1,\dots,n\}$.} of $j$.)
We clarify here why $B_n$ is not necessary for obtaining Theorem \ref{thmari}, but is simply {\it convenient} to prove the result. 
In fact, the probability distributions of $Y^n= X^n G_n$ and $\tilde{Y}^n= X^n G_n B_n$ are the same when $X^n$ is i.i.d.\ Bernoulli$(p)$, in particular 
\begin{align}
H(Y_j | Y^{j-1}) = H(\tilde{Y}_j | \tilde{Y}^{j-1}) , \quad \forall j \in \{1,\dots,n\}. \label{idi}
\end{align}
To see this, observe that $G_n$ and $B_n$ commute. This is easily seen by expressing the components of $G_n$ as Boolean functions of the row and column index (in binary expansions), namely, for $a,b \in \{0,1\}^{\log_2 n}$, where $a$ represent the row index and $b$ the column index, $G_n(a,b)= \prod_{i=1}^n (1+(a_i  + 1)b_i)$, where the addition is modulo 2. Therefore, reversing the order of the $\{a_i\}$'s or the order of the $\{b_i\}$'s leads to the same matrix. Since $B_n$ and $G_n$ commute, and since $B_nX^n$ has the same distribution as $X^n$, namely i.i.d.\ Bernoulli$(p)$, the equality \eqref{idi} follows.

The reason why $B_n$ might be preferred in polar coding is that the ``recursive'' nature of the polarization process may be more easily described with $B_n$. To see this, consider $n=4$, and
\begin{align}
G_4 = \begin{pmatrix} 1 &0 & 0 & 0 \\ 1 &1 & 0 & 0 \\ 1 &0 & 1 & 0 \\ 1 &1 & 1 & 1 \end{pmatrix},
\end{align}
$Y^4=X^4G_4$ and $\tilde{Y}^4=(Y_1,Y_3,Y_2,Y_4)$.
Observe that $\tilde{Y}^4$ can be constructed recursively as follows:
Define 
\begin{align}
&V_1=X_1+X_2, \quad 
V_2=X_2\\
&W_1=X_3+X_4, \quad
W_2=X_4.
\end{align} 
Then 
\begin{align}
&\tilde{Y}_1=V_1+W_1,\quad
\tilde{Y}_2=W_1\\
&\tilde{Y}_3=V_2+W_2,\quad
\tilde{Y}_4=W_2. 
\end{align} 
If one does not use the bit-reversal, the following is obtained, 
\begin{align}
&Y_1=V_1+W_1, \quad
Y_2=V_2+W_2, \\ 
&Y_3=W_1, \quad Y_4=W_2. 
\end{align} 
Hence the recursion is seen differently, in $\tilde{Y}^4$, two consecutive components are connected by $G_2$, whereas in $Y^4$, two alternating components are connected by $G_2$. 
Yet, as shown by \eqref{idi}, the two orderings lead to the same conditional entropies, this is simply seen here by noting that $Y_2=X_2+X_4$ and $Y_3=X_3+X_4$ can be swapped up to a relabelling of the components. To perform the decoding algorithm recursively in $O(n \log n)$, it is usually more convenient to use the bit-reversal at the encoder or decoder. 

\section{Results}
A joint polarization phenomenon for correlated sources is now presented. Implications of this result for the Slepian-Wolf coding and secret key generation problems are next presented. 

\subsection{Matrix polarization}

\begin{thm}\label{main}
Let $m$ be a positive integer, $n$ be a power of 2 and $X^n$ be an $m \times n$ random matrix with i.i.d.\ columns of distribution $\mu$ on $\F_2^m$.
Let $Y^n=X^n G_n$ over $\F_2$, where $G_n=\bigl[\begin{smallmatrix} 
      1 & 0 \\
      1 & 1 \\
   \end{smallmatrix}\bigr]^{\otimes \log_2(n)}$. 
\begin{itemize}
\item For any $\e>0$, $$|\{i \in [n] :  H(Y_i[S]|Y^{i-1}) \notin \mZ \pm \e, \text{ for any } S \subseteq [m] \}| = o(n).$$
\item For any $\e>0$, for any $i \in [n]$, there exists $A_i \in \F_2^{m \times m}$ such that 
\begin{align*}
& H(A_i Y_i |Y^{i-1}) \leq \e,
& \frac{1}{n} \sum_{j=1}^n \nullity(A_j) \stackrel{n \to \infty}{\to} H(\mu).
\end{align*}
\item Previous statement hold when $\e=O(2^{-n^\beta})$, $\beta < 1/2$. 
\end{itemize}
\end{thm}
The above is a counter-part in the source setting of the results established in \cite{mmac} for MACs. The proof is slightly more direct in the source setting, since there is no freezing of the noisy components to be done. Moreover, the scheme is entirely deterministic in the source setting (there is no randomisation of the frozen components). 

The first item in the theorem says that for all $i$ but a vanishing fraction, the conditional entropies $H(Y_i[S]|Y^{i-1})$ take near-integer values for all $S$. In particular, one would like to conclude that there is a set $S$ of maximal cardinality for which $H(Y_i[S]|Y^{i-1})=H(Y_i|Y^{i-1})$ is close to $|S|$, which implies that the components $Y_i[S]$ are roughly i.i.d.\ Bernoulli$(1/2)$ and that $Y_i[S^c]$ is roughly deterministic given $Y^{i-1}$ and $Y_i[S]$. This is implied by the second item, which gives even more information on the structure of $Y_i$.  
The second item says that there is a transformation (the matrix $A_i$) which turns the components of $Y_i$ into a lower-dimensional almost deterministic random vector (given the past components). Of course it is always possible to find such a transformation $A_i$, e.g., pick the 0 matrix. However, the second condition in item 2 implies that this is achieved without loosing any information, i.e., preserving all the entropy of the original random matrix. This linear structure also shows that there are various ways to chose the components of $Y_i$ that are random or deterministic, and the alternatives are governed by the structure of $A_i$, since there may be various ways to select a set of independent columns in $A_i$.

\subsection{Compression results}\label{cr}
Theorem \ref{main} can be used in various compression settings, in particular to compress sources which have finite field alphabets or which have finite memory, as shown in \cite{corr}. 

\begin{remark}It easy to see how this can be achieved. If the source has finite memory, group the components which are independent to create a matrix $X^n$ which has independent columns as in Theorem \ref{main}. Compression is then obtained by retaining only the high entropy components in $Y^n$. 
For the case of source with alphabet cardinality $2^m$, apply the matrix $G_n$ to the source with the $GF(2^m)$ field addition. The theorem then leads to a generalized notion of polarization where the $2^m$-bit components $Y_i$ are not uniform or deterministic, but have subsets of components (looking at $Y_i$ as an elements of $GF(2)^m$) which are uniform or deterministic. More details about these approaches can be found in \cite{corr,mmac}. 
\end{remark}


We provide now results for the Slepian-Wolf problem using the coding scheme introduced in Section \ref{compression}. Further details about the Slepian-Wolf coding problem are also provided in Section \ref{compression}.  The following definition is needed.

\begin{definition}
A probability distribution $\mu$ on $\F_2^m$ is {\it separable in $S$} if  
for $(X[1], \dots, X[m]) \sim \mu$, there exist a linear map $F: \F_2^{m-|S|} \to \F_2^{|S|} $ and a random variable $W[S]$ independent of $X[S^c]$ such that 
\begin{align}
X[S] = F(X[S^c])+W[S].
\end{align}
\end{definition}

For example, if $m=2$, $X_1$ is uniform and $X_2=X_1+Z$ where $Z$ is independent of $X_1$ but otherwise arbitrary, then $\mu$ is separable in both components. 
Another case is when $F$ is identically 0, then $X[S]$ and $X[S^c]$ are independent, a strong case of separability. 

\begin{thm}\label{sw}
Let $m \geq 1$ be fixed. Let $n$ a power of 2, and let $X^n[1],\dots,X^n[m]$ be $m$ sources with $(X_i[1],\dots,X_i[m])_{i=1}^n$ i.i.d.\ under $\mu$ on $\F_2^m$. The coding scheme of Section \ref{compression} allows to achieve distributed compression of the $m$ sources at total sum-rate $H(\mu)$, error probability $O(2^{-n^{\beta}})$, $\beta < 1/2$, and encoding/decoding complexity of $O(n \log n)$. 
Further, if $\mu$ is separable in $S \subseteq [m]$, the encoders in $S$ achieve the sum-rate on $S$, namely $H(X_1[S]|X_1[S^c])$.  
\end{thm}


Theorem \ref{sw} always ensures achievability of the sum-rate of all users, however the sum-rate on subsets of users is in general not achieved.  
How much of the rate region is lost in the general case remains an open problem. 

 
\subsection{Secrecy results}
We provide in this section the performance achieved with the two protocols presented in Section \ref{secrecy}. The first protocol uses the onion-pealing polarization \cite{ari3} whereas the second protocol uses Theorem \ref{sw}. Further details about the secret key generation setting are also provided in Section \ref{secrecy}. 

\begin{thm}\label{sk}
Let Alice observe $X^n$ and Bob observe $Y^n$ such that $(X^n,Y^n) \iid \mu$, where $\mu$ is a probability distribution over $\F_2 \times \F_2$. The polar-key-1 and polar-key-2 protocols described in Section \ref{secrecy} allow Alice and Bob to generate respectively private keys $S_n$ and $S_n'$ using public communication\footnote{Public communication is assumed to take place over a noiseless channel.} $C_n$ such that for any $\beta<1/2$ and $n$ large enough
\begin{align}
&\pp\{ S_n \neq S_n'\} = o(2^{-n^\beta}),\\ 
&I(S_n; C_n) =o(n), \\
&H(S_n) 
 \begin{cases}  = n I(\mu) + o(n), & \text{for polar-key-1},\\
   \geq n \frac{1}{2} I(\mu) + o(n), & \text{for polar-key-2},
\end{cases}
\end{align}
where $I(\mu)=I(X_1;Y_1)$. The protocols require one communication round and their computational complexity is $O(n\log(n))$.
\end{thm}
\begin{remark}
The theorem says that Alice and Bob can w.h.p.\ agree on a secret key of rate $I(\mu)$ which leaks $o(n)$ bits of information to an eavesdropper who listens to the public communication. This provides weak secrecy. 
The difficulty in achieving strong secrecy lies in the fact that for secret key agreement, the reliability of the key construction relies on transmitting the non-low entropy components whereas the secrecy relies on hiding the non-high entropy components. However there are $o(n)$ fluctuating components in the polarization phenomenon which are in neither camp and hence compromise either reliability (block error probability) or strong secrecy. 
On the other hand, if these $o(n)$ components are securely transmitted, e.g., using a one-time pad, then strong secrecy is achieved using the previous protocols.
Another alternative to achieving strong secrecy is via privacy amplification \cite{amplification}. 
\end{remark}
 
The two protocols afford several desirable attributes: 
\begin{enumerate}
\item the computational complexity is $O(n \log n)$ 
\item the key construction is deterministic  
\item the information-theoretic limit is achieved (for any distribution $\mu$ with the first protocol and for some\footnote{This is the case for example when $X$ and $Y$ are connected by a BSC} distributions $\mu$ with the second protocol). 
\end{enumerate}

The second protocol provides a different approach for secret key agreement, as it performs the decoding differently. The variation is of interest for the error-probability performance, since bits are decoded jointly rather than with the onion-pealing approach, we expect the finite block length performance to be different.  We also provide extensions to secret key agreement with more than two parties. 

\section{Compression}\label{compression}


The Slepian-Wolf coding problem consists in compressing correlated sources without the encoders cooperating after the code agreement.  
Let $X_1,\dots,X_n$ be i.i.d.\ under $\mu$ on $\F_2^m$, i.e., $X_i$ is an $m$ dimensional binary random vector and $X_1[i],\dots,X_n[i]$ is the sources output for user $i$.
Encoding these sources by having access to all the realizations requires a rate $H(\mu)$ (and it is the lowest achievable rate).
In \cite{slepian}, Slepian and Wolf showed that, even if the encoders are not able to cooperate while observing the source realizations, lossless compression can still be achieved at sum rate $H(\mu)$.

In \cite{ari3, korada}, polar codes are used for the two-user Slepian-Wolf coding problem by reducing the problem
to single-user source coding problems, using ``onion-pealing decoding'' and hence achieving a corner point of the rate region. 
The method is based on the following extension of \eqref{polar},
\begin{align}
&\frac{1}{n} |\{j \in [n]: H(Y_j[1] | Y^{j-1}[1], Y^n[2]) \geq 1-\e \}| \\
& \phantom{\frac{1}{n} |\{j \in [n]: } \stackrel{n \to \infty}{\longrightarrow}  H(X[1]|X[2]), \label{polar2}
\end{align} 
where $(X[1],X[2]) \sim \mu$.  
This means that conditioning on the {\it entire} random vector $Y^n[2]$, the first random vector polarizes.  The joint approach is also partly discussed in a preliminary version of \cite{ari3} for two correlated sources. We now present how to use Theorem \ref{main} to achieve rates on the dominant-face jointly for arbitrarily many users. 


From Theorem \ref{main}, we can associate to each component $j\in [n]$ a matrix $A_j$ such that $$H(A_j Y_j |Y^{j-1}) \leq \e.$$ If $A_j$ has rank $r_j$, by revealing $k_j=m-r_j$ components in $Y_j$ appropriately, denoted by $B_j \subseteq [m]$, we have that $A_j Y_j$ can be reduced to a full rank matrix multiplication $\tilde{A}_j Y_j[B_j^c]$ plus an almost deterministic vector. There may be several choices of $B_j$ that work, as there may be several subsets of columns in the matrix $A_j$ which have full rank. Any choice works for achieving the sum-rate. Now, $Y_j[B_j^c]$ is almost deterministic given $Y^{j-1}$ and $Y_j[B_j]$. 
Hence the number of bits to reveal is exactly $\sum_j k_j$, and as stated in the lemma, this corresponds to the total entropy of $Y$ (up to a $o(n)$).

\begin{definition}\label{bk}
For $\e>0$ and $i\in [n]$, let $A_i$ be a matrix as in Theorem \ref{main}. 
Select $B_i=B_{i,\e} \subseteq [m]$ such that $|B_i|$ has maximal cardinality and such that the columns of $A_i$ indexed by $B_i$ are linearly independent. For $k\in [m]$, define $B[k]=B_\e[k]$ to be the subset of $[n]$ containing the indices $j$ for which $k \in B_j$.  
\end{definition}

\noindent
{\bf Polar codes for distributed data compression:}\\
1. For a given $n$ and $\e$ (which sets the error probability), 
the users agree on a ``chart'' of deterministic indices, i.e., $B[k]$ for user $k$ as in Definition \ref{bk}.\\ 
2. For $k \in [m]$, user $k$ computes $Y[k]=X[k] G_n$ and transmit $Y_{B_{\e}[k]^c}[k]$ to the decoder. \\
3. The decoder, in possession of $Y_{B_{\e}[k]^c}[k]$ for every $k \in [m]$, runs {\it polar-matrix-decoder} to get $Y$. 
Since $G_n$ is invertible, $G_n^{-1}=G_n$, the decoder finds $X=Y G_n$.

\begin{definition}{\it polar-matrix-decoder}\\
Inputs: for $k \in [m]$, $B[k] \subseteq [n]$ and $y_{B[k]}[k] \in \F_2^{|B[k]|}$.\\
Output: $y \in \F_2^{m \times n}$.\\
Algorithm:\\
For $j=1,\dots,n$, let $B_j^c=\{k : j \notin B[k]\}$, if $B_j^c$ is not empty, compute 
$$\hat{y}[B_j^c]=\arg\max_{u \in \F_2^{|B_j^c|}} \pp\{ Y[B_j^c]= u | Y^{j-1}=y^{j-1}, Y[B_j]=y[B_j] \}.$$ 
\end{definition}

\begin{remark} The results in this paper may hold when $m=o(n)$, but one has to be careful with the complexity scaling when $m$ gets large. An interesting question is to investigate whether the decoding complexity can be brought down from $2^m \log m$ to $m$ with the joint decoding. This is possible if the source is separable in all components, as in the example of Section \ref{cr}. 
\end{remark}

It would also be interesting to study the gap in the marginal rates of the capacity region achieved with the joint approach. In a subsequent work, \cite{arinew} proposes an alternative approach to reach arbitrary rates on the dominant faces. The approach does not rely on a polarization of the rate region but on variations of the entropy chain-rules. It remains an open problem to find a method to polarize the entire rate region into extremal rate regions without loosing individual rates. We next show that such a decomposition, irrespectively of the polar coding technique, is possible when $m\leq 3$, but may not be achievable for any distribution when $m\geq 4$.

\begin{definition}
For a probability distribution $\mu$ on a finite set, define $\rho_\mu$ to be the $2^m$-dimensional real vector formed by the collection of numbers $\rho_\mu(S)=H(X[S]|X[S^c])$, where $S \subseteq [m]$ and $X[1,\dots,m] \sim \mu$. We call such vectors {\it Slepian-Wolf vectors}.    
\end{definition}

\begin{prop}\label{ray}
For $m \leq 3$, $\rho_{\mu}$ can always be expressed as a positive weighted sum of extremal Slepian-Wolf vectors generated by a linear form as in Theorem \ref{main}. For $m \geq 4$, there exist distributions $\mu$ for which the decomposition is not possible.    
\end{prop}

\begin{proof}[Proof of Proposition \ref{ray}]
Note that $\rho_\mu(S)=H(X[S]|X[S^c])=H(\mu)-H(X[S^c])$. For a fixed $\mu$, consider the {\it entropic vectors} whose components are given by the entropies $H(X[S])$ for all $S \subseteq [m]$. 
Define a polymatroid vector $f$ on a ground set $[m]$ to be a positive vector of length $2^m$ where the components $f(S)$, indexed by subsets $S \subseteq [m]$, satisfy
\begin{align}
& f(\emptyset)=0,\\
&f(J) \leq f(K), \quad  \forall J \subseteq K \subseteq [m], \\
&f(J \cup K)  + f(J \cap K) \leq f(J) + f(K) , \quad \forall J,K \subseteq [m].
\end{align}
Note that since $f(\emptyset)=0$ by definition, we will only specify the $2^m-1$ components of a polymatroid/entropic vector which correspond to the non-empty sets in what follows.  
Denoting by $\Gamma_m^*$ the set of all possible entropic vectors on $m$ random variables, and by $\Gamma_m$ the set of all polymatroid vectors on a ground set $[m]$, it is know from Theorem 2 page 1984 in \cite{yeung} that $\bar{\Gamma}_3^*=\Gamma_3$. 
Moreover, $\Gamma_3$ is a convex cone whose extreme rays are given by 
\begin{align*}
m_1=(1,0,0,1,1,0,1), \qquad m_2=(0,1,0,1,0,1,1),\\
m_3=(0,0,1,0,1,1,1),\qquad m_4=(1,1,0,1,1,1,1),\\
m_5=(1,0,1,1,1,1,1),\qquad m_6=(0,1,1,1,1,1,1),\\
m_7=(1,1,1,1,1,1,1),\qquad m_8=(1,1,1,2,2,2,2),
\end{align*}
where the above vectors correspond to $$(H(X_1),H(X_2), H(X_3), H(X_1,X_2), H(X_1,X_3), H(X_2,X_3), H(X_1,X_2,X_3)).$$ 
These can be checked to be the only extreme rays, using the characterization of extreme rays in Section 10.4.B of  \cite{welsh} or from \cite{matus-ray}. 
Note that $(1,1,1,2,2,2,3)$ is not an extreme ray since it corresponds to $m_1+m_2+m_3$. 
Now, any entropic vector can be expressed as a positive weighted sum of extreme rays, which in this case can be checked to be all equivalent to the rank vector of a matrix (where a rank vector is defined by $r(S)=\rank(A[S])$, $S \subseteq [m]$). For example, $(1,0,0,1,1,0,1)$ is the rank vector of  $\begin{pmatrix} 1& 0& 0  \end{pmatrix}$ and $(1,1,1,2,2,2,2)$ is the rank vector of $\begin{pmatrix} 1& 0 & 1 \\ 0 & 1 & 1 \end{pmatrix}$.

For $m\geq 4$, $\bar{\Gamma}_m^* \subsetneq \Gamma_m$. In particular, for $m=4$, there exists an extreme ray $$(1,1,1,2,2,2,2,2,2,2,2,2,2,2,2)$$ which is an entropic vector but which is not a rank function.  
It is an entropic vector by defining $Z_1,Z_2,Z_3$ i.i.d.\ Bernoulli$(1/2)$ and $X_1:=Z_1$, $X_2:=Z_2$, $X_3:=Z_1+Z_2$ and $X_4:=(Z_1,Z_2)$. It is not a rank function since $r(4)>1$.  
\end{proof}
\section{Secrecy}\label{secrecy}

We now consider the problem of information-theoretic secret key agreement. We focus first on a basic setting: Alice observes privately $X^n$ and Bob observes privately $Y^n$, where $(X_i,Y_i)$ are i.i.d.\ from a distribution $\mu$ over $\F_2 \times \F_2$. Using public communication, Alice and Bob want to agree on a secret key which has largest possible rate. Denoting by $C$ the communication between Alice and Bob and $S$ the secret key, we would like $I(S;C)$ to be small. If $I(S;C) \to 0$, strong secrecy is achieved, if $\frac{1}{n}I(S;C) \to 0$, weak secrecy is achieved. It is known from \cite{AC,maurer} that the secrecy capacity for this setting is $I(\mu)=I(X_1;Y_1)$. We now introduce the polar code protocols. 



\subsection{First protocol}
\begin{definition}\label{sets}
For $\e>0$, $n \geq 1$ and $\mu$ a probability distribution on $\F_2 \times \F_2$, the following sets are defined 
\begin{align}
R_{\e,n}(X) &:=\{i \in [n] : H(U_i| U^{i-1})  \geq \e \},\\
R_{\e,n}(X|Y) &:=\{i \in [n] : H(U_i| U^{i-1},V^n)  \geq \e \},
\end{align}
where $U^n:= X^nG_n$, $V^n:= Y^nG_n$, $G_n=    \bigl[\begin{smallmatrix} 
      1 & 0 \\
      1 & 1 \\
   \end{smallmatrix}\bigr]^{\otimes \log_2(n)}$, $(X^n,Y^n) \iid \mu$.
\end{definition}

\noindent
{\bf Polar-key-1 protocol.} The protocol is defined for a given distribution $\mu$ on $\F_2 \times \F_2$ (the distribution correlating the inputs of Alice and Bob) and a given $n\geq 1, \beta \in (0,1/2)$.\\
Alice input: $x^n \in \F_2^n$, $\e \in (0,1)$, $\mu$. Alice output: $s_n$.\\
Bob input: $y^n \in \F_2^n$, $\e \in (0,1)$, $\mu$. Bob output: $s'_n$.\\
Let $\e:=2^{-n^\beta}$.\\
\noindent
(1) Alice computes $u^n=x^nG_n$ and sends $u^n[R_{\e,n}(X|Y)]$ to Bob using public communication and sets $s_n=u^n[R_{1-\e,n}(X) \setminus R_{\e,n}(X|Y)]$ (cf.\ definition \ref{sets}).\\
(2) Bob computes $\hat{u}^n$ from $u^n[R_{\e,n}(X|Y)]$ and $y^n$ using the decoding algorithm in \cite{ari3} Section III and sets $s'_n=\hat{u}^n[R_{1-\e,n}(X) \setminus R_{\e,n}(X|Y)]$.

\subsection{Second protocol}
\begin{definition}\label{sets2}
For $\e>0$, $n \geq 1$ and $\mu$ a probability distribution on $\F_2 \times \F_2$, the following sets are defined 
\begin{align}
R_{\e,n}(X) &:=\{i \in [n] : H(U_i| U^{i-1})  \geq \e \},\\
Q_{\e,n}(X|Y) &:=\{i \in [n] : H_i  \notin \mZ^3 \pm \e \} \cup \{i \in [n] : H_i  \in \{(1,0,1),(1,1,2)\} \pm \e \} 
\end{align}
where $U^n:= X^nG_n$, $V^n:= Y^nG_n$, $G_n=    \bigl[\begin{smallmatrix} 
      1 & 0 \\
      1 & 1 \\
   \end{smallmatrix}\bigr]^{\otimes \log_2(n)}$, $(X^n,Y^n) \iid \mu$, and 
 \begin{align}  
   H_i=[H(U_i  | U^{i-1}, V^{i-1}, V_i) , H(V_i | U^{i-1}, V^{i-1}, U_i), H(U_i ,V_i | U^{i-1}, V^{i-1})].
   \end{align}
\end{definition}
Note that $Q_{\e,n}(X|Y)$ contains the components which are not polarized or which have high conditional entropy in the $U$ vector for Bob. 

\noindent
{\bf Polar-key-2 protocol.} The protocol is defined for a given distribution $\mu$ on $\F_2 \times \F_2$ (the distribution correlating the inputs of Alice and Bob) and a given $n\geq 1, \beta \in (0,1/2)$. Before the protocol starts, the two parties identify the marginal rates $(R_1,R_2)$ achieved with Theorem \ref{main} and find the maximum between $R_1-H(X_1|Y_1)$ and $R_2-H(Y_1|X_1)$. If the maximum of these two terms is the second term, invert the role of Alice and Bob in the following protocol. \\
Alice input: $x^n \in \F_2^n$, $\e \in (0,1)$, $\mu$. Alice output: $s_n$.\\
Bob input: $y^n \in \F_2^n$, $\e \in (0,1)$, $\mu$. Bob output: $s'_n$.\\
Let $\e:=2^{-n^\beta}$.\\
\noindent
(1) Alice computes $u^n=x^nG_n$ and sends $u^n[Q_{\e,n}(X|Y)]$ to Bob using public communication and sets $s_n=u^n[R_{1-\e,n}(X) \setminus Q_{\e,n}(X|Y)]$ (cf.\ definition \ref{sets2}).\\
(2) Bob computes $\hat{u}^n$ from $u^n[Q_{\e,n}(X|Y)]$ and $y^n$ using the decoding algorithm in \cite{corr} and sets $s'_n=\hat{u}^n[R_{1-\e,n}(X) \setminus Q_{\e,n}(X|Y)]$.

\begin{remark} The two protocols provide different secret key rates, since the rates achieved by Theorem \ref{main} are not necessarily the entire rate region of the Slepian-Wolf coding problem, whereas \cite{ari3} always achieves a corner point of the region.  Note however that, as opposed to the Slepian-Wolf coding problem where the decoder needs to decode the second-user sequence $v^n$ into an estimate $\hat{v}^n$, in the secret-key generation problem, Bob is already in possession of true sequence $v^n$. Therefore, the first protocol may not only be sub-optimal in terms of rate but it is also worse in terms of error probability, as it does not make use of the future values of $v^n$ in the decoding algorithm. Note, however, that the second protocol decodes the components of $u^n$ and $v^n$ causally without requiring the future of $v^n$. For the Slepian-Wolf coding problem the difference is  notable, since $v^n$ is not known but estimated by the decoder, and in fact the onion-pealing decoding of $u^n$ from $\hat{v}^n$ appears to be more sensitive to error propagations than the joint decoding approach at finite block length \cite{saed}. However, for secrecy, the second protocol is mainly a toy application of Theorem \ref{main}.

\end{remark}

\subsection{Extensions}
\subsubsection{Side-information}
One can assume that Eve receives side-information $Z^n$ about $X^n,Y^n$. In the case where $Z^n$ is i.i.d.\ and stochastically degraded form $Y^n$, such that $X^n-Y^n-Z^n$ form a Markov chain, the above protocols can be modified to generate secret keys. For example for the first protocol, require Alice to send the components of $U^n$ such that $H(U_i|U^{i-1},Y^n)\geq \e$. Eve has now $Z^n$, however by the Markov property $$H(U_i|U^{i-1},Z^n) \geq  H(U_i|U^{i-1},Z^n,Y^n)=H(U_i|U^{i-1},Y^n)$$
hence the high entropy components are nested and previous expansions apply, using for the key the components with high conditional entropy $H(U_i|U^{i-1},Z^n)$ which are missing to Eve. 
Note that for the case where Eve has degraded side-information, one may consider the special case for which $X=Y$, in this case strong secrecy is directly achieved with previous expansions.

\subsubsection{Three terminals}
We now consider\footnote{While this manuscript was under review, further work on secrete key generation with polar coding was proposed in \cite{remi}.} the case of three terminals. Let $X^n,Y^n,Z^n$ be i.i.d.\ from a known distribution on $\F_2^3$ and assume that $X^n=Y^n + A^n$ and $Z^n=Y^n + B^n$, where $A^n, B^n$ are independent with, say, $H(A_1) \leq H(B_1)$. Hence there exists $\Delta^n$ independent of $A^n$ such that $B^n=A^n+\Delta^n$. One obtains 
$$H(V_i|V^{i-1},Z^n) = H(V_i|V^{i-1},Y^n+A^n + \Delta^n) \geq H(V_i|V^{i-1},Y^n+A^n) = H(V_i|V^{i-1},X^n).$$
Hence, the terminal having $Y$ sends the components for which $H(V_i|V^{i-1},Z^n) \geq \e$ to the two other terminals. 
Since these components contain the components for which $H(V_i|V^{i-1},X^n) \geq \e$, both terminals can recover $Y$ and generate a secret key of rate $I(Y;Z)=\min(I(X;Y),I(Y;Z))$, which is optimal according to \cite{narayan}.

\section{Proofs}
\subsection{Proof of Theorem \ref{main}}

\begin{lemma}\label{mainlemma}
For any $\e>0$, we have,
\begin{align*}
\frac{1}{n} |\{j \in [n]:  H(Y_j[S]|Y^{j-1}) \in \mZ \pm \e, \forall S \subseteq [m] \}| \\ \to 1 
\end{align*}
\end{lemma}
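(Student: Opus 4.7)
The plan is to reduce the lemma to a polarization phenomenon for sources valued in the abelian group $G = (\F_2^m, +)$. Since $G_n$ has entries in $\F_2$ and $G$ is an $\F_2$-module, $Y = XG_n$ is well-defined column-by-column in $G$. I would invoke a refined form of the polarization (extending Ar\i kan's argument to $G$, analogous to the prime-power extension developed in Section~\ref{galois}): for any $\e > 0$, a $(1-o(1))$-fraction of $j \in [n]$ satisfy that the conditional distribution $P_{Y_j | Y^{j-1}}$ is, in the average total variation over $y^{j-1}$, within $\e$ of a ``coset-uniform'' distribution $\mathrm{Unif}(a + H)$ for some subgroup $H \leq G$ and coset representative $a$, with the subgroup order $\log_2 |H|$ concentrated on a single value in $\{0, 1, \ldots, m\}$ for most $y^{j-1}$.

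Given this coset-uniform structure, the lemma follows by elementary reasoning. For any $S \subseteq [m]$, if $Z \sim \mathrm{Unif}(a + H)$ then the marginal $Z[S]$ is uniform on $a[S] + \pi_S(H)$, where $\pi_S: \F_2^m \to \F_2^{|S|}$ is the coordinate projection; since $\pi_S(H)$ is an $\F_2$-subspace of $\F_2^{|S|}$, $|\pi_S(H)| = 2^k$ for some $k \in \{0, 1, \ldots, |S|\}$, giving $H(Z[S]) = k$. Combining this with the previous step and Fannes-type continuity of the Shannon entropy, for $j$ in the good fraction and each fixed $S \subseteq [m]$ one concludes $H(Y_j[S] | Y^{j-1}) \in \{0, 1, \ldots, |S|\} \pm g(\e)$ with $g(\e) \to 0$. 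Since $m$ is fixed, a union bound over the $2^m$ subsets $S$ preserves the $(1-o(1))$-fraction and yields the lemma upon rescaling $\e$.

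The main obstacle is establishing the polarization in this refined form (coset-uniform conditionals with concentrated subgroup order), rather than the weaker conclusion that $H(Y_j | Y^{j-1})$ itself polarizes to an integer. The driving one-step inequality is $H(Z_1 + Z_2) \geq H(Z_1)$ for i.i.d.\ $G$-valued copies $Z_1, Z_2$, with equality iff $P_Z$ is uniform on a coset of a subgroup; upgrading this equality characterization to a quantitative stability estimate that Ar\i kan's martingale machinery can consume is the technical heart of the argument. Once that is in hand, the combinatorial facts---projections of coset-uniform distributions are coset-uniform on projected subgroups, whose orders are powers of $2$---deliver the lemma for every $S$ simultaneously.
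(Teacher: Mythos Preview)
Your route is viable in principle but genuinely different from, and heavier than, the paper's. The paper never proves (or needs) that the conditional law $P_{Y_j\mid Y^{j-1}}$ is close to coset-uniform. Instead it works directly with the family of processes $\eta_k[S]:=H(Y_j[S]\mid Y^{j-1})$ indexed by $S\subseteq[m]$ (with $j$ sampled via the random $\pm$ branching). Each $\eta_k[S]$ is a bounded supermartingale (a martingale for $S=[m]$), hence converges almost surely. The limit is identified by a one-step argument (Lemma~\ref{invar}): whenever the increment $\eta_k[S]-\eta_{k+1}[S]$ on the $+$ branch is small, for every $i\in S$ the single-bit term $H(Y_j[i]\mid Y^{j-1},Y_j[S\setminus i])$ is within $\e$ of $\{0,1\}$, by Ar\i kan's \emph{binary} stability result applied to the bit $Y_j[i]$ with side information $(Y^{j-1},Y_j[S\setminus i])$. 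The chain rule then forces $\lim_k\eta_k[S]\in\{0,1,\dots,|S|\}$, and intersecting over the $2^m$ subsets gives the lemma. So the paper reduces everything to the scalar case and never touches the subgroup structure of $\F_2^m$.

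What your approach would buy is a stronger structural conclusion (the full coset picture), at the cost of proving a quantitative stability version of ``$H(Z_1+Z_2)=H(Z_1)$ iff $Z_1$ is coset-uniform'' over $\F_2^m$; this is substantially more work than the binary stability the paper invokes. There is also a gap in your sketch as written: having $\log_2|H|$ concentrated on a single value across $y^{j-1}$ is not enough for the projection step, because two subgroups of $\F_2^m$ of the same order can have projections $\pi_S(H)$ of different dimensions. Since $H(Y_j[S]\mid Y^{j-1})$ is an \emph{average} over $y^{j-1}$, you need the subgroup $H$ itself (or at least all its projected dimensions $\dim\pi_S(H)$ for every $S$) to be essentially constant in $y^{j-1}$ to land near a single integer. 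This does hold under group polarization, but it is an additional point you would have to argue; the paper's direct supermartingale-per-subset argument sidesteps it entirely.
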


In order to prove Lemma \ref{mainlemma}, we need the following definition and lemmas.
\begin{definition}
For a random vector $V$ distributed over $\F_2^m$, define $V^{-}=V+V'$ and $V^{+}=V'$, where $V'$ is an i.i.d.\ copy of $V$.
Let $\{b_i\}_{i \geq 1}$ be i.i.d.\ binary random variables uniformly distributed in $\{-,+\}$. 
Denote by $V^{b_1 \dots b_k}$ the iteration of the $+$ and $-$ operations on $V$. Let
\begin{align*}
& \eta_k[S]=H(V^{b_1 \dots b_k}[S]| V^{c_1 \dots c_k}, \forall (c_1 \dots c_k) < (b_1 \dots b_k))
\end{align*}
for $S \subseteq [m]$, where the order between $(-,+)$-sequences is the lexicographic order with $- < +$ (for example $-+ < +-$).
\end{definition}

Note that $$\{ V^{b_1 \dots b_k} : (b_1 \dots b_k) \in \{-,+\}^k \} \stackrel{(d)}{=} X G_{2^k}$$ where $X$ is the matrix whose columns are i.i.d copies of $V$. The following lemma explains the previous definition.  
\begin{lemma}\label{corresp}
Using $V \sim \mu$ in the definition of $\eta_k[S]$, we have for any $n$ and any set $D \subseteq [0,|S|]$
$$\frac{1}{n} |\{j \in [n] : H(Y_j[S]| Y^{j-1} ) \in D \}| = \pp\{ \eta_{\log_2 (n)}[S] \in D \}.$$
\end{lemma}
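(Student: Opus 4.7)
The plan is to prove the lemma by induction on $k = \log_2 n$. The base case $k=0$ is immediate: $Y_1 = X_1 \sim \mu$ and $\eta_0[S]$ is a point mass at $H(V[S])$, so the claimed identity reduces to $H(Y_1[S]) = H(V[S])$.

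For the inductive step, I exploit the Kronecker structure $G_n = G_2 \otimes G_{n/2}$. Splitting $X = [X^{(1)} \mid X^{(2)}]$ into two independent $m \times (n/2)$ blocks with i.i.d.\ columns from $\mu$, the standard polar recursion gives $Y = [Y' \mid Y'']$ with $Y' = (X^{(1)} + X^{(2)})\,G_{n/2}$ and $Y'' = X^{(2)}\,G_{n/2}$. The columns of $X^{(1)} + X^{(2)}$ are i.i.d.\ with distribution $V^-$, while those of $X^{(2)}$ are i.i.d.\ with distribution $V^+ \sim \mu$. For the first half $j \leq n/2$, the past $Y^{j-1}$ sits inside $Y'$, so $H(Y_j[S] \mid Y^{j-1}) = H(Y'_j[S] \mid (Y')^{j-1})$. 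Applying the inductive hypothesis to $Y'$, a level-$(k{-}1)$ polar transform of an i.i.d.\ $V^-$ source, produces the distribution of $\eta_{k-1}^{V^-}[S]$. The identity $V^{-,b_2 \ldots b_k} = (V^-)^{b_2 \ldots b_k}$, together with the observation that any $c <_{\mathrm{lex}} b$ with $b_1 = -$ necessarily has $c_1 = -$, gives $\eta_k^V[S] \mid b_1 = - \;\stackrel{d}{=}\; \eta_{k-1}^{V^-}[S]$, matching the $b_1 = -$ branch of the target.

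For the second half I set $j' := j - n/2$, so $H(Y_j[S] \mid Y^{j-1}) = H(Y''_{j'}[S] \mid Y', (Y'')^{j'-1})$, with the extra conditioning on all of $Y'$ as the subtle point. Since $Y'$ is a bijective function of $W := X^{(1)} + X^{(2)}$, I pass to the conditional law given $W$: conditionally on $W$, the columns of $X^{(2)}$ are independent, each drawn from the law of $V^+$ given $V^- = W_i$. Applying the inductive hypothesis to this conditional polar transform yields the $\eta$-process for $V^+$ with an additional conditioning on $V^-$. Unwrapping the definition of $\eta_k^V[S]$ restricted to $b_1 = +$ shows that its conditioning set $\{V^{-,c_2 \ldots c_k}\}_{\mathrm{all}} \cup \{V^{+,c_2 \ldots c_k}\}_{c <_{\mathrm{lex}} b}$ plays exactly the role of $(W,\,(Y'')^{j'-1})$ under the correspondence $V^{+,b_2 \ldots b_k} = (V^+)^{b_2 \ldots b_k}$. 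Combining the two halves with their $\tfrac{1}{2},\tfrac{1}{2}$ weights then recovers the full distribution of $\eta_k^V[S]$.

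The main obstacle will be making the second-half correspondence fully precise. A priori $\sigma(Y')$ and $\sigma(\{V^{-,c_2 \ldots c_k}\}_{\mathrm{all}})$ are different $\sigma$-algebras, because the $V^-$-subtree in the $\eta$ construction is built from i.i.d.\ copies of $V$ whose leaves partially overlap those used by the $V^+$-subtree; for instance at $k=2$ the first sits inside $\sigma(V_1+V_2,\,V_3+V_4)$ while the second sits inside $\sigma(V_1+V_2+V_3+V_4,\,V_2+V_4)$. What saves the argument is that, column by column, both conditioning structures induce the \emph{same} conditional law $V^+ \mid V^-$ on the $V^+$-side, thanks to the leaf-level exchangeability of the i.i.d.\ copies of $V$ used in both constructions; this is precisely what allows the inductive hypothesis to be applied in the conditional setting, and executing this identification cleanly is the crux of the proof.
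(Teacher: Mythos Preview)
Your inductive argument is essentially correct, but it is far more elaborate than the paper's own proof, which is a one-liner. The paper observes (in the line preceding the lemma) that the ordered family $(V^{b^{(1)}},\dots,V^{b^{(n)}})$, with $b^{(1)}<\dots<b^{(n)}$ the lex order on $\{-,+\}^k$, has the \emph{same joint distribution} as $(Y_1,\dots,Y_n)$; this holds because both families are of the form $XPG_n$ for some column permutation $P$, and the columns of $X$ are i.i.d., so $XP\stackrel{(d)}{=}X$. Once that is granted, the value $H(Y_j[S]\mid Y^{j-1})$ coincides with the realization of $\eta_k[S]$ at $b=b^{(j)}$, and since $(b_1,\dots,b_k)$ is uniform on $\{-,+\}^k$, the empirical fraction over $j$ equals $\pp\{\eta_k[S]\in D\}$. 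No induction, no splitting into halves.

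Your proof reaches the same conclusion, but what you flag as ``the main obstacle'' in the second half is exactly the exchangeability argument the paper uses globally. Note also that your appeal to the inductive hypothesis for the second half is not quite legitimate as stated: conditionally on $W$, the columns of $X^{(2)}$ are independent but \emph{not} identically distributed (the $i$-th column has law $V^+\mid V^-=W_i$), so the IH for i.i.d.\ columns does not apply. The fix you sketch, via leaf-level exchangeability, is not an application of the IH at all but a direct distributional identification of $(Y'',Y')$ with the $(V^{+,\cdot},V^{-,\cdot})$ subtrees; once you do that, you might as well do it for the full tuple in one stroke, which is precisely the paper's route. If you want to keep the inductive architecture, the clean way is to strengthen the IH to sources with side information (i.i.d.\ pairs $(X_i,Z_i)$, tracking $H(Y_j[S]\mid Y^{j-1},Z)$), after which the second half goes through without the exchangeability detour.
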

The proof is a direct consequence from the fact that the $b_k$'s are i.i.d.\ uniform.
Using the invertibility of $\bigl[\begin{smallmatrix} 
      1 & 0 \\
      1 & 1 \\
   \end{smallmatrix}\bigr]$ and properties of the conditional entropy, we have the following. 
\begin{lemma}
$\eta_k[S]$ is a super-martingale with respect to $b_k$ for any $S \subseteq [m]$ and a martingale for $S=[m]$.
\end{lemma}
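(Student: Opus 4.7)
The plan is to fix $b_1, \ldots, b_{k-1}$ and analyze the average of $\eta_k[S]$ over the uniform, independent choice $b_k \in \{-,+\}$, aiming for an upper bound by $\eta_{k-1}[S]$ with equality when $S = [m]$. To this end I set $U := V^{b_1 \dots b_{k-1}}$ with level-$(k-1)$ lex-past $P := \{V^c : c < b_1 \dots b_{k-1}\}$, and write $\tilde U, \tilde P$ for the corresponding independent copies furnished by the $+/-$ step at level $k$, so that $V^{b_1 \dots b_{k-1} -} = U + \tilde U$ and $V^{b_1 \dots b_{k-1} +} = \tilde U$. The two-fold iid block structure of the $X G_{2^k}$ representation (noted before Lemma \ref{corresp}) then yields the key independence $(U, P) \perp (\tilde U, \tilde P)$.

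I next unfold the level-$k$ lex-pasts in terms of these level-$(k-1)$ quantities. Since $-$ is the smallest letter, the $-$-branch past is exactly all pairs $(V^{c-}, V^{c+})$ with $c < b_1 \dots b_{k-1}$; by invertibility of the Arikan kernel $\bigl[\begin{smallmatrix} 1 & 0 \\ 1 & 1 \end{smallmatrix}\bigr]$ this information is precisely $P \cup \tilde P$, and the $+$-branch past adds the single extra vector $V^{b_1 \dots b_{k-1} -} = U + \tilde U$. Therefore
\begin{align*}
\eta_k[S]\big|_{b_k=-} &= H((U+\tilde U)[S] \mid P, \tilde P), \\
\eta_k[S]\big|_{b_k=+} &= H(\tilde U[S] \mid P, \tilde P, U + \tilde U).
\end{align*}

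The final step is to sum these and apply, in order, (i) that conditioning on the full vector $U+\tilde U$ is stronger than conditioning only on its restriction $(U+\tilde U)[S]$, (ii) the chain rule, (iii) the entropy-preserving bijection $(u, u') \mapsto (u+u', u')$ on $\F_2^m \times \F_2^m$, and (iv) the independence $(U, P) \perp (\tilde U, \tilde P)$:
\begin{align*}
\eta_k[S]\big|_{b_k=-} + \eta_k[S]\big|_{b_k=+} &\leq H((U+\tilde U)[S], \tilde U[S] \mid P, \tilde P) \\
&= H(U[S], \tilde U[S] \mid P, \tilde P) = 2\, \eta_{k-1}[S].
\end{align*}
Dividing by two gives the super-martingale inequality. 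When $S = [m]$, $(U+\tilde U)[S]$ is the full vector $U + \tilde U$, so step (i) becomes an equality and the whole chain collapses to an equality, yielding the martingale property. I do not anticipate a serious obstacle; the only delicate bookkeeping is the recursive unfolding of the level-$k$ lex-past into $P$, $\tilde P$ and (for the $+$-branch) the single extra vector $U + \tilde U$, which rests entirely on the invertibility of the $2 \times 2$ kernel.
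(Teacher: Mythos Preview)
Your proposal is correct and follows essentially the same route as the paper: the paper writes the $n=2$ computation $2H(X_1[S]) = H(Y_1[S]) + H(Y_2[S]\mid Y_1[S]) \geq H(Y_1[S]) + H(Y_2[S]\mid Y_1)$ with equality when $S=[m]$, and then remarks that for general $n$ ``the same expansion holds including in the conditioning the appropriate past random variables.'' Your argument is exactly this, except that you make the recursive bookkeeping of the past explicit via $P,\tilde P$ and the kernel invertibility, which the paper leaves implicit.
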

\begin{proof}
For $n=2$, we have
\begin{align}
2 H(X_1 [S]  ) &=  H(X_1 [S] X_2[S] ]) \notag \\
&= H(Y_1 [S] Y_2[S] ) \notag \\
&= H(Y_1 [S]  ) + H(Y_2 [S]  | Y_1[S]  ) \notag \\
& \geq H(Y_1 [S] ) + H(Y_2 [S]  | Y_1 ) \label{last} 
\end{align}
with equality in \eqref{last} if $S=[m]$.
For $n\geq 2$, the same expansion holds including in the conditioning the appropriate ``past'' random variables.
\end{proof}

Note that because $\eta_k[S]$ is a martingale for $S=[m]$, the sum-rate $H(\mu)$ is conserved through the polarization process. 
Now, using the previous lemma and the fact that $\eta_k[S] \in [0,|S|]$ for any $S$, the martingale convergence theorem implies the following.
\begin{corol}\label{conv}
For any $S \subseteq [m]$, $\eta_k[S]$ converges almost surely.
\end{corol}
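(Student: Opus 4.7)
The plan is to derive this immediately from Doob's martingale convergence theorem applied to bounded super-martingales. The two prerequisites are already in place by the time the corollary is stated: the preceding lemma establishes that $\eta_k[S]$ is a super-martingale with respect to the filtration $\mathcal{F}_k = \sigma(b_1, \dots, b_k)$, and the definition of $\eta_k[S]$ as a conditional entropy of a random vector in $\F_2^{|S|}$ (measured in bits) gives the pointwise deterministic bound $0 \leq \eta_k[S] \leq |S|$ for every $k$ and every realization of $(b_1, b_2, \dots)$.

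First I would spell out that almost-sure boundedness trivially implies $\sup_k \E|\eta_k[S]| \leq |S| < \infty$, which is more than enough to invoke the super-martingale convergence theorem. Since $\eta_k[S]$ is non-negative and $L^1$-bounded, Doob's theorem yields the existence of an integrable random variable $\eta_\infty[S]$, taking values in $[0,|S|]$, such that $\eta_k[S] \to \eta_\infty[S]$ almost surely. By bounded convergence, the convergence also holds in $L^1$, which, while not asked for in the corollary, is useful for subsequent arguments (for instance, verifying that the $L^1$ limit of the increment $\eta_{k+1}[S]-\eta_k[S]$ is zero along the martingale coordinate $S=[m]$).

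There is no real obstacle to this particular step; the entire content sits in the already-proved super-martingale lemma together with the deterministic bound by $|S|$. The nontrivial work — which I would expect in the next lemma, not here — is to identify which values the limit $\eta_\infty[S]$ can take, and specifically to show that $\eta_\infty[S] \in \{0,1,\dots,|S|\}$ almost surely. That ``integer-valued limit'' step is what powers Lemma \ref{mainlemma}, and it typically proceeds by examining the super-martingale defect $\E[\eta_k[S] - \eta_{k+1}[S] \mid \mathcal{F}_k]$, showing that it must vanish in the limit, and then arguing that the only fixed points of the $V \mapsto (V^-, V^+)$ transformation compatible with a vanishing defect correspond to entropies at integer values. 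The present corollary, however, requires none of that; it is a one-line application of Doob.
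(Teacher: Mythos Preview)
Your proposal is correct and matches the paper's own argument exactly: the paper simply cites the previous super-martingale lemma together with the bound $\eta_k[S]\in[0,|S|]$ and invokes the martingale convergence theorem. Your additional remarks on $L^1$ convergence and on the subsequent integer-valued limit are accurate but go beyond what the corollary requires.
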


The following allows to characterize possible values of the process $\eta_k[S]$ when it converges.

\begin{lemma}\label{invar}
For any $\e>0$, $X$ valued in $\F_2^{m}$, $Z$ arbitrary, $(X',Z')$ an i.i.d.\ copy of $(X,Z)$, $S \subseteq [m]$, there exists $\delta=\delta(\e)$ such that
$H( X'[S] | Z') - H(X' [S]| Z, Z',X[S]+X'[S]) \leq \delta$ implies $H(X'[S] | Z')-H(X'[S \setminus i] | Z')  \in \{0,1 \} \pm \e$ for any $i \in S$.
\end{lemma}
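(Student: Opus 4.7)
The plan is to reduce the statement to its scalar case $|S|=1$ by the chain rule for entropy, and then to resolve the scalar case by a compactness argument on the space of laws of Bernoulli parameters. Throughout, write $W := (Z, X[S\setminus i])$ and $W' := (Z', X'[S\setminus i])$; the pairs $(X[i],W)$ and $(X'[i],W')$ are i.i.d.\ copies because $(X,Z)$ and $(X',Z')$ are.

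For the reduction step, the key identity is that the conditioning $(Z, Z', X[S]+X'[S], X'[S\setminus i])$ is equivalent to $(W, W', X[i]+X'[i])$, since $X[S]+X'[S]$ together with $X'[S\setminus i]$ recovers $X[S\setminus i]$. Applying the chain rule on both sides of the hypothesis (splitting $H(X'[S]|\cdot)$ into an $S\setminus i$ part and an $i$ part) and using the monotonicity $H(X'[S\setminus i]|Z,Z',X[S]+X'[S]) \leq H(X'[S\setminus i]|Z')$ gives
\[
H(X'[i]|W') - H(X'[i]|W,W',X[i]+X'[i]) \;\leq\; H(X'[S]|Z') - H(X'[S]|Z,Z',X[S]+X'[S]) \;\leq\; \delta.
\]
Since the target quantity $H(X'[S]|Z') - H(X'[S\setminus i]|Z')$ equals $H(X'[i]|W')$ by one more application of the chain rule, it suffices to prove the statement in the scalar case with $W$ playing the role of $Z$.

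For the scalar case, let $p(w) := \pp(X[i]=1 \mid W=w)$ and let $\nu$ denote the law of $p(W)$ on $[0,1]$. A direct calculation using conditional independence of $X[i]$ and $X'[i]$ given $(W,W')$ (whereby $X[i]+X'[i]$ is conditionally Bernoulli with parameter $p(W)+p(W')-2p(W)p(W')$) yields
\[
H(X'[i]|W') - H(X'[i]|W,W',X[i]+X'[i]) \;=\; F(\nu) \;:=\; \iint \bigl[ H(x+y-2xy) - H(y) \bigr]\, d\nu(x)\, d\nu(y).
\]
A short analysis of the Bernoulli entropy function shows the integrand is nonnegative and vanishes iff $x \in \{0,1\}$ or $y = 1/2$. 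Hence $F(\nu)=0$ forces $\nu$ to be either supported on $\{0,1\}$ (in which case $\int H(p)\, d\nu = 0$) or equal to $\delta_{1/2}$ (in which case $\int H(p)\, d\nu = 1$); in both cases the target $H(X'[i]|W') = \int H(p)\, d\nu(p)$ lies in $\{0,1\}$.

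To upgrade this zero-set statement to a quantitative $\delta(\e)$, I invoke compactness: the space of Borel probability measures on $[0,1]$ is weakly compact, and both $\nu \mapsto F(\nu)$ and $\nu \mapsto \int H(p)\, d\nu$ are weakly continuous because their integrands are continuous and bounded on a compact domain. The closed set $\{\nu : \int H\, d\nu \in [\e,1-\e]\}$ is disjoint from the zero set $\{F=0\}$, so $F$ attains a strictly positive minimum $\delta(\e)>0$ on it, and any $\delta \leq \delta(\e)$ works. The main subtlety is uniformity in the (unspecified) distribution of $(Z,Z')$; this is resolved because the above argument depends on that distribution only through the scalar law $\nu$, so the compactness is on a single fixed space and yields a universal $\delta(\e)$.
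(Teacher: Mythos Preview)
Your reduction to the scalar case is essentially the paper's: both peel off coordinate $i$ via the chain rule (the paper phrases it through mutual-information data-processing, you through the equivalence of the conditionings $(Z,Z',X[S]+X'[S],X'[S\setminus i])$ and $(W,W',X[i]+X'[i])$), arriving at the bound
\[
H(X'[i]\mid W')-H(X'[i]\mid W,W',X[i]+X'[i])\le\delta .
\]

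Where the two diverge is the scalar step. The paper simply invokes the single-variable polarization fact from \cite{ari} (stated there in the channel model with uniform binary inputs) to conclude $H(X'[i]\mid W')\in\{0,1\}\pm\e$. You instead prove this from scratch: writing the gap as $F(\nu)=\iint[H(x*y)-H(y)]\,d\nu(x)\,d\nu(y)$ on the law $\nu$ of the conditional Bernoulli parameter, identifying the zero set of $F$ as $\{\nu:\int H\,d\nu\in\{0,1\}\}$, and extracting a uniform $\delta(\e)$ by weak compactness of $\mathcal P([0,1])$ together with continuity of $F$ and of $\nu\mapsto\int H\,d\nu$. This is self-contained and makes transparent that $\delta(\e)$ depends only on $\e$ and not on the ambient law of $(X,Z)$; the paper's route is shorter but outsources exactly this step to \cite{ari}. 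One cosmetic point: since the infimum of $F$ over the closed set $\{\nu:\int H\,d\nu\in[\e,1-\e]\}$ is attained, you should take $\delta$ strictly below that minimum (e.g.\ half of it) rather than ``any $\delta\le\delta(\e)$''.
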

\begin{proof}
We have
\begin{align}
&H( X'[S] | Z') - H(X' [S]| Z, Z',X[S]+X'[S]) \notag \\
&=I(X'[S]; X[S]+X'[S]| Z ,Z') \notag \\
& \geq I(X'[S]; X[i]+X'[i]| Z, Z', X[S\setminus i]+X'[S\setminus i]) \notag \\
& \geq I(X'[i]; X[i]+X'[i]| Z ,Z', X[S\setminus i]+X'[S\setminus i], X'[S \setminus i]) \notag \\
& = H(X'[i]| Z', X'[S \setminus i]) - H(X'[i]| Z ,Z',  X[S\setminus i],X'[S \setminus i],X[i]+X'[i]) . \label{squiz}
\end{align}
The point of previous expansions is to bring the expression appearing in a one-step polarization transform of a scalar (or single-user) source. 
It is shown in \cite{ari3} that if $A_1,A_2$ are binary random variables and $B_1,B_2$ are valued in arbitrary sets such that 
$\pp_{A_1 A_2 B_1 B_2} (a_1,a_2,b_1,b_2) = P(a_1+a_2)P(a_2) Q(b_1|a_1+ a_2) Q(b_2|a_2)$, for some conditional probability distribution $Q$ and binary probability distribution $P$,
then, for any $a>0$, there exists $b >0$ such that $H(A_2|B_2) - H(A_2|B_1 B_2 A_1) \leq b$ implies $H(A_2|B_2) \in\{0,1\} \pm a$. This is  exactly the setting posed in equation (2) of \cite{ari3} for the polarization of a single source polarization with side information.  
Hence, we set $A_1=X[i]+X'[i]$, $A_2=X'[i]$, $B_1=Z , X[S\setminus i]$ and $B_2=Z' , X'[S\setminus i]$ and we can pick $\delta$ small enough to lower bound \eqref{squiz} and show that $H(X'[i]| Z', X'[S \setminus i]) \in \{0,1\} \pm \e$. 
From the chain rule, we conclude that $H(X'[S]| Z')-H(X'[S \setminus i]| Z') \in \{0,1\} \pm \e$.
\end{proof}
We then get the following using Corollary \ref{conv} and Lemma \ref{invar}.
\begin{corol}\label{integer}
With probability one, $\lim_{k \to \infty} \eta_k[S] \in \{0,1,\dots,|S|\}$.
\end{corol}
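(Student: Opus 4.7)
The approach is to combine Corollary \ref{conv} (a.s.\ convergence of $\eta_k[S]$) with Lemma \ref{invar} (which says: if the $+$-branch causes only a small drop in conditional entropy, then certain entropy differences must lie near $0$ or $1$), and induct on the cardinality of $S$.

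First I would set up a telescoping argument. Since $\eta_k[S]$ is a bounded supermartingale converging a.s.\ to some $\eta_\infty[S]$, the per-step differences $\eta_k[S]-\eta_{k+1}[S]$ go to $0$ a.s. Recalling the definition of $\eta_{k+1}[S]$, when $b_{k+1}=+$ the conditioning passes from $Z'$ (the past of the $+$-copy) to $(Z,Z',V+V')$, where $V$ and $V'$ denote the two i.i.d.\ copies at stage $k$. Since $H(V'[S]\mid Z,Z',V+V')\le H(V'[S]\mid Z,Z',V[S]+V'[S])$, the drop $\eta_k[S]-\eta_{k+1}[S]$ on the $+$-branch upper-bounds the quantity $H(V'[S]\mid Z')-H(V'[S]\mid Z,Z',V[S]+V'[S])$ appearing in the hypothesis of Lemma \ref{invar}.

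Because $b_{k+1}=+$ occurs infinitely often a.s.\ and the one-step increments vanish, for any $\e>0$ the hypothesis of Lemma \ref{invar} is eventually satisfied with $\delta(\e)$ along this subsequence of indices. The lemma then yields, for every $i\in S$,
\begin{align*}
H(V'[S]\mid Z')-H(V'[S\setminus i]\mid Z')\in\{0,1\}\pm\e.
\end{align*}
The left-hand side coincides, by the chain rule, with $\eta_k[S]-\eta_k[S\setminus i]$. Both terms converge a.s.\ by Corollary \ref{conv}, so their limit $\eta_\infty[S]-\eta_\infty[S\setminus i]$ lies in $\{0,1\}\pm\e$ for every $\e>0$, hence in $\{0,1\}$.

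Finally, I would close the argument by induction on $|S|$. The base case $|S|=0$ is trivial since $\eta_k[\emptyset]\equiv 0$. For the inductive step, fix any $i\in S$; by the inductive hypothesis $\eta_\infty[S\setminus i]\in\{0,1,\dots,|S|-1\}$ a.s., and by the above $\eta_\infty[S]=\eta_\infty[S\setminus i]+c_i$ with $c_i\in\{0,1\}$, hence $\eta_\infty[S]\in\{0,1,\dots,|S|\}$. The main delicate point is the bookkeeping of the lex-order conditioning in the definition of $\eta_k[S]$ versus the simpler joint conditioning appearing in Lemma \ref{invar}, and verifying that passing to the full-coordinate sum rather than the $S$-restricted sum only strengthens the hypothesis of that lemma, which is precisely why the one-step drop of the $\eta$-process controls the quantity the lemma needs.
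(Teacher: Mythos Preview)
Your proposal is correct and is precisely the argument the paper intends: the paper only states that Corollary~\ref{integer} follows ``using Corollary~\ref{conv} and Lemma~\ref{invar},'' and you have supplied the missing details --- vanishing one-step increments along the infinitely many $+$-steps feed the hypothesis of Lemma~\ref{invar}, and then induction on $|S|$ forces the limit into $\{0,1,\dots,|S|\}$. Your observation that the full-vector conditioning $V+V'$ dominates the $S$-restricted conditioning $V[S]+V'[S]$ (so the $+$-branch drop controls the lemma's hypothesis) is the right bridge between the process $\eta_k$ and Lemma~\ref{invar}.
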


Finally, Lemma \ref{corresp} and Corollary \ref{integer} imply Lemma \ref{mainlemma}. The next lemma proves the second item of the theorem. 

\begin{lemma}\label{tech1}
For a random vector $Y$ valued in $\F_2^m$, and an arbitrary random vector $Z$, if $$H(Y [S]| Z) \in \{0,1,\dots,|S|\} \pm \e$$ for any $S \subseteq [m]$, we have $$  H(\sum_{i \in S} Y[i]|Z) \in \{0,1\} \pm \delta(\e),$$
with $\delta(\e) \stackrel{\e \to 0}{\to}0$.
\end{lemma}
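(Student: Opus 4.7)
Write $\sigma_S:=\sum_{i\in S}Y[i]$ for brevity. The plan is to prove the lemma by induction on $m$. The base case $m=1$ is direct: for $S=\emptyset$, $H(\sigma_S|Z)=0$; for $S=\{1\}$, $\sigma_S=Y[1]$ and the hypothesis gives $H(Y[1]|Z)\in\{0,1\}\pm\e$. For $m\geq 2$ and $S\subsetneq[m]$, the conclusion follows by applying the inductive hypothesis to the subvector $Y[S]$ of dimension $|S|<m$: the hypothesis on subsets $T\subseteq S$ transfers verbatim. So only the case $S=[m]$ requires real work.

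For $S=[m]$, fix any $i_0\in[m]$ and decompose $\sigma_{[m]}=Y[i_0]+\sigma_{[m]\setminus i_0}$. I apply the inductive hypothesis to $Y[[m]\setminus i_0]$ (dimension $m-1$) under two conditionings: with $Z$, which inherits the hypothesis directly; and with $(Z,Y[i_0])$, where the transferred hypothesis follows from $H(Y[T]|Z,Y[i_0])=H(Y[T\cup\{i_0\}]|Z)-H(Y[i_0]|Z)$, a difference of near-integers lying in $[0,|T|]$ hence near an integer of $\{0,\dots,|T|\}$. This yields $H(\sigma_{[m]\setminus i_0}|Z)$ and $H(\sigma_{[m]\setminus i_0}|Z,Y[i_0])=H(\sigma_{[m]}|Z,Y[i_0])$ both in $\{0,1\}\pm\delta_{m-1}(2\e)$. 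Combined with $H(Y[i_0]|Z)\in\{0,1\}\pm\e$ from the hypothesis, I have three near-$\{0,1\}$ quantities to analyze.

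Three sub-cases are routine. (a) If $H(Y[i_0]|Z)\approx 0$: $Y[i_0]$ is near-constant given $Z$, so $H(\sigma_{[m]}|Z)\approx H(\sigma_{[m]\setminus i_0}|Z)\in\{0,1\}\pm\delta$. (b) If $H(Y[i_0]|Z)\approx 1$ and $H(\sigma_{[m]\setminus i_0}|Z,Y[i_0])\approx 1$: then $H(\sigma_{[m]}|Z)\geq H(\sigma_{[m]}|Z,Y[i_0])\approx 1$. (c) If $H(Y[i_0]|Z)\approx 1$ and $H(\sigma_{[m]\setminus i_0}|Z)\approx 0$: $\sigma_{[m]\setminus i_0}$ is near-constant given $Z$, so $\sigma_{[m]}$ is near a $Z$-measurable translate of $Y[i_0]$ and $H(\sigma_{[m]}|Z)\approx H(Y[i_0]|Z)\approx 1$. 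The delicate sub-case (d) is $H(Y[i_0]|Z)\approx 1$, $H(\sigma_{[m]\setminus i_0}|Z,Y[i_0])\approx 0$, $H(\sigma_{[m]\setminus i_0}|Z)\approx 1$. Here a Fano-type argument produces a function $g$ with $\sigma_{[m]\setminus i_0}\approx g(Z,Y[i_0])$ with high probability; for each $z$ the section $g(z,\cdot):\F_2\to\F_2$ is either constant or non-constant, and the near-uniformity $H(\sigma_{[m]\setminus i_0}|Z)\approx 1$ forces the $Z$-mass on non-constant sections toward $1$. On such $z$, $\sigma_{[m]}=Y[i_0]+g(z,Y[i_0])$ collapses to a $Z$-measurable constant (equal to $0$ or $1$ depending on whether $g(z,\cdot)$ is the identity or the negation), giving $H(\sigma_{[m]}|Z)\approx 0$.

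The main obstacle is sub-case (d): quantitatively converting the entropy gap $1-H(\sigma_{[m]\setminus i_0}|Z)$ into the statement that $g(Z,\cdot)$ is non-constant with overwhelming probability, while carefully propagating the $\e\to\delta$ losses through all the averages over $Z$. A Fano inequality combined with Markov's inequality should give $\delta(\e)=O(\sqrt{\e})$ per inductive step, compounding over $m$ levels to $\delta_m(\e)=O(\e^{2^{-m}})$, which suffices for the required $\delta(\e)\to 0$ as $\e\to 0$.
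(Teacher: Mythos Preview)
The paper does not actually prove this lemma in the text: it simply writes ``This lemma is proved in \cite{abbematroid}'' and moves on. So there is no in-paper argument to compare yours against line by line. That said, your inductive proof is essentially correct and self-contained, which is more than the paper offers here.

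A few comments on the substance. The reduction to $S=[m]$ via strong induction on $m$ is clean, and your verification that the hypothesis transfers to the conditioning $(Z,Y[i_0])$ via the chain-rule identity $H(Y[T]\mid Z,Y[i_0])=H(Y[T\cup\{i_0\}]\mid Z)-H(Y[i_0]\mid Z)$ is correct (the difference of two $\e$-near integers, constrained to $[0,|T|]$, is $2\e$-near an integer in $\{0,\dots,|T|\}$). Cases (a)--(c) are indeed routine. For case (d), your outline is right and can be made fully rigorous: writing $A=Y[i_0]$, $B=\sigma_{[m]\setminus i_0}$, the binary Fano bound $P_e\le \tfrac12 H(B\mid Z,A)$ gives $\Pr[B\neq g(Z,A)]\le \eta/2$; a Markov split on $z$ then shows that the $Z$-mass on constant sections of $g(z,\cdot)$ is at most $(\eta'+\sqrt{\eta/2})/(1-h(\sqrt{\eta/2}))$ using $H(B\mid Z)\ge 1-\eta'$; and on bijective sections $A+g(z,A)$ is a $z$-measurable constant. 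Assembling these gives $H(\sigma_{[m]}\mid Z)=O(\sqrt{\eta}+\eta')$, so the recursion $\delta_m(\e)=O(\sqrt{\delta_{m-1}(2\e)})$ indeed yields $\delta_m(\e)\to 0$ as $\e\to 0$ for each fixed $m$, as you claim.

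As for the comparison: the reference \cite{abbematroid} (``Mutual information, matroids, and extremal dependencies'') suggests the intended proof goes through a structural/matroid characterization of entropy vectors near the integer lattice, likely identifying the extremal configurations globally rather than peeling off one coordinate at a time. Your approach is more elementary and direct, at the cost of a worse (doubly-exponential in $m$) dependence of $\delta$ on $\e$; for the purposes of this paper, where $m$ is fixed and only $\delta(\e)\to 0$ is needed, that is harmless.
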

This lemma is proved in \cite{abbematroid} as Lemma 7, we also provide a proof in Section \ref{lemma5}.  
Denote by $a_S$ the indicator vector of $S$, i.e., $a_S \in \{0,1\}^m$ with $a_S(i)=1$ if and only of $i \in S$, and denote by $a_S Y_j =\sum_{i \in S} Y_j[i]$. Note that, if for two sets $S,T \subseteq [m]$, we have $H(a_S Y_j|Z) \leq \delta(\e)$ and $H(a_T Y_j|Z) \leq \delta(\e)$, then $H(a_S Y_j +a_T Y_j |Z)  \leq 2\delta(\e)$. Moreover $H(a_S Y_j +a_T Y_j |Z)=H(a_{S + T} Y_j |Z)  \leq 2\delta(\e)$, where $S+T$ denotes the exclusive OR of the two sets. Hence the set of indicator subsets of $[m]$ which have low conditional entropy is a subspace of $\F_2^m$. 
To keep notations compact, we can organize these subsets in a matrix of rank at most $m$, i.e., for any $j \in E_n$, there exists a matrix $A_j$ of rank $r_j$, such that 
$$H(A_j Y_j |Y^{j-1}) \leq m \delta(\e) .$$
We now show how we can use this other characterization of the dependencies in $Y$ to conclude a speed convergence result. We first need the following ``single-user'' result, following the approach of \cite{mmac}. 

\begin{lemma}\label{tech2}
For any $\beta<1/2$, $\e \in (0,1)$ and $\e_n=2^{-n^{\beta}}$, we have 
\begin{align*}
\frac{1}{n} | \{j \in [n]: \e_n < H(\sum_{i \in S} Y_j[i]|Y^{j-1}) <\e , \text{ for any } S \subseteq [m] \} |  \to 0 .
\end{align*}
\end{lemma}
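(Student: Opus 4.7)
The plan is to reduce each linear-form process to a single-user Arikan polarization with side information, and then to invoke the Arikan--Telatar fast-polarization theorem \cite{ari2}. First, fix $S\subseteq[m]$ and set $W_k:=\sum_{i\in S}X_k[i]\in\F_2$ for $k\in[n]$. Since the columns $X_k$ are i.i.d., the scalars $W_1,\dots,W_n$ are i.i.d.\ from some $\F_2$-marginal $\mu_S$ of $\mu$. Because $\F_2$-addition acts on the row index while $G_n$ acts on the column index, the two commute, so
\begin{align*}
\sum_{i\in S} Y_j[i] \;=\; \sum_{k\in[n]}(G_n)_{kj}\,W_k \;=\; (WG_n)_j .
\end{align*}
Hence $H(\sum_{i\in S}Y_j[i]\mid Y^{j-1})=H((WG_n)_j\mid Y^{j-1})$, where the conditioning contains the ``single-user past'' $(WG_n)^{j-1}$ together with correlated additional data, namely the polarized rows $X[i,:]\,G_n$ for $i\in S^c$.

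Next, this is exactly the source-polarization-with-side-information setting of \cite{ari3}: the source is the i.i.d.\ binary sequence $W$, and the side information is the jointly i.i.d.\ vector sequence $\{X_k[S^c]\}_{k\in[n]}$. I would then appeal to the Arikan--Telatar fast-polarization theorem \cite{ari2}, which carries over to this setting because the one-step $+/-$ contractions of the Bhattacharyya-type parameter $\sqrt{H(1-H)}$ are preserved when the conditioning is enlarged by (conditionally) independent side information: conditioning can only decrease the parameter, and the inequalities $Z_+\leq Z^2$ and $Z_-\leq 2Z-Z^2$ go through unchanged thanks to the linearity of $G_n$ over $\F_2$. This would yield, for every $\beta<1/2$ and $\e_n=2^{-n^\beta}$,
\begin{align*}
\frac{1}{n}\,\bigl|\{j\in[n]:\,\e_n < H((WG_n)_j\mid Y^{j-1}) < 1-\e_n\}\bigr|\;\longrightarrow\;0.
\end{align*}
For $n$ large enough $(\e_n,\e)\subseteq(\e_n,1-\e_n)$, so the same $o(n)$ bound holds with $1-\e_n$ replaced by $\e$.

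To finish, I would observe that the set in the lemma's statement is contained, for every single non-trivial $S$, in the set $\{j:\e_n<H(\sum_{i\in S}Y_j[i]\mid Y^{j-1})<\e\}$, so the claim follows from the preceding display. If instead the intended reading is ``for \emph{some} $S$ the linear form is intermediate'', a union bound over the $2^m$ subsets---an $O(1)$ overhead---yields the same conclusion.

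The principal difficulty I expect is the second step, namely verifying carefully that enlarging the conditioning from $(WG_n)^{j-1}$ to the full matrix past $Y^{j-1}$---which can entangle the $W$-process with data derived from rows outside $S$---does not break the Arikan--Telatar Bhattacharyya recursions. This is the source-side analogue of the verification carried out in \cite{mmac} for the multiple-access channel, and ultimately rests on the same linearity and commutativity observation used in the first paragraph.
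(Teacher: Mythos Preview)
Your proposal is correct and follows essentially the same route as the paper: reduce the linear form $\sum_{i\in S}Y_j[i]$ to a single-user binary polarization process with side information (the full past $Y^{j-1}$), verify the one-step Bhattacharyya recursions survive the enlarged conditioning by appeal to the source-polarization inequalities of \cite{ari3}, and then invoke the Arikan--Telatar rate theorem \cite{ari2}. The paper is terser---it works directly with the Bhattacharyya process $\zeta_k[S]=Z(\sum_{i\in S}V^{b_1\dots b_k}[i]\mid\text{full past})$ and the bounds $\zeta_{k+1}\le\zeta_k^2$, $\zeta_{k+1}\le 2\zeta_k$---but your explicit identification of $W_k=\sum_{i\in S}X_k[i]$ as an i.i.d.\ binary source with correlated side information, and your flagged ``principal difficulty,'' are exactly what the paper handles by citing Proposition~1 of \cite{ari3}.
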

\begin{proof}
We define the auxiliary family of random processes $\zeta_k[S]$, for $S \subseteq [m]$, by
\begin{align*}
& \zeta_k[S]=Z(\sum_{i \in S}V^{b_1 \dots b_k}[i]| V^{c_1 \dots c_k}, \forall (c_1 \dots c_k) < (b_1 \dots b_k))
\end{align*}
where, for a binary uniform random variable $A$ and an arbitrary random variable $B$, $Z(A|B)=2 \E_B (\pp\{ A=0 | B \} \pp\{ A=1 | B \})^{1/2}$ is the Bhattacharyya parameter. Note that
\begin{align}
Z(A|B) \geq H(A|B). \label{bata}
\end{align}
This follows from Proposition 2 in \cite{ari3}.
We then have, using the chain rule and source polarization inequalities on the Bhattacharyya parameter, namely Proposition 1 in \cite{ari3}, that 
\begin{align*}
& \zeta_{k+1}[S] \leq \zeta_{k}[S]^2 \text{ if } b_{k+1}=1,\\
& \zeta_{k+1}[S] \leq 2 \zeta_k[S]  \text{ if } b_{k+1}=0, 
\end{align*}
and using Theorem 3 of \cite{ari2}, we conclude that for any $\alpha < 1/2$
$$\lim \inf_{\ell \rightarrow \infty} \pp(\zeta_k \leq 2^{-2^{\alpha k}}) \geq \pp( \zeta_\infty=0).$$
Finally, the conclusion follows from \eqref{bata}. 
\end{proof}

Finally, we have
\begin{align}
\frac{1}{n} | \{j \in [n]: 
&H(Y_j [S]| Y^{j-1}) \in \{0,1,\dots,|S|\} \pm \e, \forall S \subseteq [m], \notag \\
&\exists A_j \text{ with } \rank (A_j)= \mathrm{int}( m-H(Y_j | Y^{j-1})),\notag \\
&H(A_jY_j |Y^{j-1}) <\e_n  \} | \to 1 . \label{set}
\end{align}
This holds since the property that $H(Y_j [S]| Y^{j-1}) \in \{0,1,\dots,|S|\} \pm \e$, for all $S \subseteq [m]$, together with Lemma \ref{tech1}, imply that $H(\sum_{i \in S} Y_j[i]|Y^{j-1}) \in \{0,1\} \pm \delta(\e)$ with $\delta(\e) \to 0$ as $\e \to 0$. 
Let $\delta>0$ sufficiently small, and assume that $H(\sum_{i \in S} Y_j[i]|Y^{j-1}) \in \{0,1\} \pm \delta$. 
Note that $H(\sum_{i \in S_1} Y_j[i]|Y^{j-1}) < \delta$ and $H(\sum_{i \in S_2} Y_j[i]|Y^{j-1}) < \delta$ imply $H(\sum_{i \in S_1 + S_2} Y_j[i]|Y^{j-1}) < 2 \delta$, where $S_1+S_2$ is the disjoint union (or symmetric difference) of $S_1$ and $S_2$. Take now all subsets of $[m]$ for which $H(\sum_{i \in S} Y_j[i]|Y^{j-1}) < \delta$, and create a matrix whose rows are the indicator vectors of these subsets, and remove redundant rows to obtain an equivalent matrix $A_j$ which must have at most $m$ rows. Then $H(A_j Y_j|Y^{j-1}) < m \delta$ (which is small since $m$ is fixed). Moreover, the rank of $A_j$, i.e., the number of non-redundant rows is determined as follows. Consider all subsets $T$ such that $H(\sum_{i \in T} Y_j[i]|Y^{j-1}) > 1- \delta$. Take now a basis for these subsets (i.e., as before, a set of non-redundant indicator vectors for these subsets). The cardinality of this basis must be the closet integer to $H(Y_j| Y^{j-1})$ (recall that $H(Y_j| Y^{j-1})$ is close to an integer), since $H(\sum_{i \in S} Y_j[i]|Y^{j-1}) \in \{0,1\} \pm \delta$ for all $S \subseteq [m]$. Hence the rank of $A_j$ must be the complement of the closet integer to $H(Y_j| Y^{j-1})$, i.e., $\rank (A_j)= \mathrm{int}( m-H(Y_j | Y^{j-1}))$, where $\mathrm{int}(x)$ is the closest integer to $x$.  
Finally, we need to ensure that the sums which are vanishing are indeed vanishing fast, i.e., $H(A_jY_j |Y^{j-1}) <\e_n$, but this follows directly from Lemma \ref{tech2}.

\subsection{Proof of Theorem \ref{sw}}

For $j \in [n]$, the components to be decoded in $Y_j$ are not correctly decoded with probability
$$P_e(j) \leq H(A_jY_j |Y^{j-1}) \leq \e_n.$$
Hence the block error probability is bounded as
$$P_e \leq \sum_{j \in [n]} P_e(j) \leq n \e_n,$$
and we obtain a block error probability of $O(2^{-n^{\beta}})$ for any $\beta < 1/2$.

Note that for a fix $m$, the complexity is $O(n \log n)$ by applying the same divide-and-conquer factorization of the likelihoods as developed in \cite{ari}, to our setting where the + and - operations are over $GF(2^m)$ instead of $GF(2)$. 

The achievability of the sum rate is ensured by the second item of Theorem \ref{main}. We now show the last part of the theorem.   
Assume  that $\mu$ separable in $S$, then
\begin{align}
n H(X_1[S]|X_1[S^c]) = \sum_{i=1}^n H(Y_i[S]|Y_i[S^c], Y^{i-1}) .
\end{align}

 

To see this, assume w.l.o.g.\ that $S=\{1,\dots,k\}$ for $k \leq m$. 
Hence, $X^n=[X^n[S], X^n[S^c]]$.
By the separability assumption, there exist $W_1[S], \dots, W_n[S]$ i.i.d.\ and $F$ such that 
\begin{align}
X^n=[F(X^n[S^c])+W^n[S], X^n[S^c]], 
\end{align}
where $F$ is applied component-wise. 
Hence
\begin{align}
Y^n=X^n G_n&=[F(X^n[S^c])G_n+Z^n[S], Y^n[S^c]] \\
&=[F(Y^n[S^c])+Z^n[S], Y^n[S^c]]
\end{align}
where $Z^n=W^n G_n$ and where we used the fact $F$ is linear in the last equality. 
Therefore, 
\begin{align}
 H(Y_i[S]|Y_i[S^c], Y^{i-1}) &= H(F(Y_i[S^c])+Z_i[S] | Y_i[S^c], Y^{i-1})\\
&=H(Z_i[S] | Y_i[S^c], Z^{i-1}[S], Y^{i-1}[S^c])\\
&=H(Z_i[S] | Z^{i-1}[S])
\end{align}
and
\begin{align}
\sum_{i=1}^n  H(Y_i[S]|Y_i[S^c], Y^{i-1}) &=\sum_{i=1}^n H(Z_i[S] | Z^{i-1}[S])\\
&= H(Z^n[S])\\
&=n H(W_1[S]).
\end{align}
Moreover,
\begin{align}
H(X_1[S]| X_1[S^c]) &= H(F(X_1[S^c])+W_1[S] | X_1[S^c]) =H(W_1[S] ).
\end{align}


\subsection{Proof of Theorem \ref{sk}}

\begin{proof}[Proof of Theorem \ref{sk} for the polar-key-1 protocol]
Let $\e = 2^{-n^\beta}$, $(X^n,Y^n) \iid \mu$ and let $U^n= X^nG_n$, $V^n= Y^nG_n$. 
Let $\hat{U}^n$ be the output of the decoding algorithm in \cite{ari3} run by Bob in possession of $U^n[R_{\e,n}(X|Y)]$ and $Y^n$.
From Theorem 3 in \cite{ari3}, $\pp\{ \hat{U}^n \neq U^n \} =  o(2^{-n^\beta})$. Hence the first claim in the theorem.  
The only data communicated publicly in the protocol is $C_n:=U^n[R_{\e,n}(X|Y)]$, and under correct decoding, the key is 
\begin{align}
S_n&=U^n[R_{1-\e,n}(X) \setminus R_{\e,n}(X|Y)]\\
&=U^n[R_{1-\e,n}(X) \setminus (R_{1-\e,n}(X|Y) \cup  \Delta_{\e,n}(X|Y))]
\end{align}
where
\begin{align}
R_{\e,n}(X) &:=\{i \in [n] : H(U_i| U^{i-1})  \geq \e \},\\
R_{\e,n}(X|Y) &:=\{i \in [n] : H(U_i| U^{i-1},V^n)  \geq \e \},\\
\Delta_{\e,n}(X|Y) &:= \{i \in [n] : H(U_i| U^{i-1},V^n)  \in (\e,1-\e) \}  . 
\end{align}
Therefore
\begin{align}
I(C_n;S_n)&= I(U^n[R_{1-\e,n}(X|Y) \cup \Delta_{\e,n}(X|Y)];U^n[R_{1-\e,n}(X) \setminus (R_{1-\e,n}(X|Y) \cup \Delta_{\e,n}(X|Y)]))\\
&\leq I(U^n[R_{1-\e,n}(X|Y) \cup \Delta_{\e,n}(X|Y)];U^n[R_{1-\e,n}(X) \setminus R_{1-\e,n}(X|Y)])\\
&\leq  a(n) + b(n)
\end{align}
where
\begin{align}
a(n)&:=I(U^n[R_{1-\e,n}(X|Y) ];U^n[R_{1-\e,n}(X) \setminus R_{1-\e,n}(X|Y)]) \\
b(n)&:=I( U^n[\Delta_{\e,n}(X|Y)] ; U^n[R_{1-\e,n}(X)  | U^n[R_{1-\e,n}(X|Y) ]) .
\end{align}
Note that by the source polarization \cite{ari3}, 
\begin{align}
b(n) \leq | \Delta_{\e,n}(X|Y) | = o(n).
\end{align}
Moreover, since $R_{1-\e,n}(X|Y) \subseteq R_{1-\e,n}(X)$, 
\begin{align}
a(n)
&=H ( U^n[R_{1-\e,n}(X) \setminus R_{1-\e,n}(X|Y)])+H(U^n[R_{1-\e,n}(X|Y)]) -H(U^n[R_{1-\e,n}(X)])  \label{eq1}\\
&\leq |R_{1-\e,n}(X)| -  H(U^n[R_{1-\e,n}(X)])
\end{align}
and 
\begin{align}
H(U^n[R_{1-\e,n}(X)]) \geq \sum_{i \in R_{1-\e,n}(X)} H(U_i|U^{i-1}) \geq (1-\e)|R_{1-\e,n}(X)|. \label{eq2} 
\end{align}
Hence, from \eqref{eq1} and \eqref{eq2},   
\begin{align*}
a(n) \leq \e |R_{\e,n}(X)| \leq \e n. 
\end{align*}
Finally, 
\begin{align}
H(U^n[R_{1-\e,n}(X) \setminus R_{\e,n}(X|Y) ] ) &\geq H(U^n[R_{1-\e,n}(X)]) - H(U^n[R_{\e,n}(X|Y) ] ) \\
& \geq  (1-\e)|R_{1-\e,n}(X) | - |R_{\e,n}(X|Y)| \notag \\
& = (1-\e) (nH(X_1)+o(n)) - (nH(X_1|Y_1)+o(n)) \label{pol} \\
& = n I(\mu) + o(n) \notag.
\end{align}
Since the secret-key cannot have entropy rate more than $I(\mu)$, the conclusion follows.  
\end{proof}

\begin{proof}[Proof of Theorem \ref{sk} for the polar-key-2 protocol]
Let $\e = 2^{-n^\beta}$, $(X^n,Y^n) \iid \mu$ and let $U^n= X^nG_n$, $V^n= Y^nG_n$. 
Let $\hat{U}^n$ be the output of the {\it polar-matrix-decoder} run by Bob in possession of $U^n[Q_{\e,n}(X|Y)]$ and $Y^n$.
From Theorem \ref{sw}, we have $\pp\{ \hat{U}^n \neq U^n \} =  o(2^{-n^\beta})$. Hence the first claim in the theorem.  
The only data communicated publicly in the protocol is $C_n:=U^n[Q_{\e,n}(X|Y)]$, and under correct decoding, the key is 
\begin{align}
S_n&=U^n[R_{1-\e,n}(X) \setminus Q_{\e,n}(X|Y)]\\
&=U^n[R_{1-\e,n}(X) \setminus (\widetilde{R}_{1-\e,n}(X|Y) \cup  \Omega_{\e,n}(X|Y))]
\end{align}
where
\begin{align}
R_{\e,n}(X) &:=\{i \in [n] : H(U_i| U^{i-1})  \geq \e \},\\
\widetilde{R}_{1-\e,n}(X|Y) &:= \{i \in [n] : H_i  \in \{(1,0,1),(1,1,2)\} \pm \e \},\\
\Omega_{\e,n}(X|Y) &:= \{i \in [n] : H_i  \notin \mZ^3 \pm \e \} ,\\
Q_{\e,n}(X|Y) &:=  \Omega_{\e,n}(X|Y) \cup \widetilde{R}_{1-\e,n}(X|Y).
\end{align}

Therefore
\begin{align}
I(C_n;S_n)&= I(U^n[\widetilde{R}_{1-\e,n}(X|Y) \cup \Omega_{\e,n}(X|Y)];U^n[R_{1-\e,n}(X) \setminus (\widetilde{R}_{1-\e,n}(X|Y) \cup  \Omega_{\e,n}(X|Y)) ] )\\
&\leq I(U^n[\widetilde{R}_{1-\e,n}(X|Y) \cup \Omega_{\e,n}(X|Y)];U^n[R_{1-\e,n}(X) \setminus \widetilde{R}_{1-\e,n}(X|Y) ] )\\
&\leq  \alpha(n) + \beta(n)
\end{align}
where
\begin{align}
\alpha(n)&:=I(U^n[\widetilde{R}_{1-\e,n}(X|Y) ];U^n[R_{1-\e,n}(X) \setminus \widetilde{R}_{1-\e,n}(X|Y)]  ) \\
\beta(n)&:=I(\Omega_{\e,n}(X|Y);U^n[R_{1-\e,n}(X) ]  | U^n[\widetilde{R}_{1-\e,n}(X|Y) ]) .
\end{align}
By Theorem \ref{main}, 
\begin{align}
\beta(n) \leq | \Omega_{\e,n}(X|Y) | = o(n).
\end{align}
Moreover, since $\widetilde{R}_{\e,n}(X|Y) \subseteq R_{\e,n}(X)$, 
\begin{align}
\alpha(n)
&=H(U^n[R_{1-\e,n}(X) \setminus \widetilde{R}_{1-\e,n}(X|Y)]) - H( U^n[R_{1-\e,n}(X) \setminus \widetilde{R}_{1-\e,n}(X|Y)]  | U^n[\widetilde{R}_{1-\e,n}(X|Y) ])  \\
& \leq H(U^n[R_{1-\e,n}(X) \setminus \widetilde{R}_{1-\e,n}(X|Y)]) - H( U^n[R_{1-\e,n}(X)) + H( U^n[\widetilde{R}_{1-\e,n}(X|Y) ])\\ 
&\leq |R_{1-\e,n}(X)| -  H(U^n[R_{1-\e,n}(X)])\\
& \leq \e n.
\end{align}
Finally 
\begin{align}
H(U^n[R_{1-\e,n}(X) \setminus Q_{\e,n}(X|Y) ] ) &\geq H(U^n[R_{1-\e,n}(X)]) - H(U^n[Q_{\e,n}(X|Y) ] ) \\
& \geq  (1-\e)|R_{1-\e,n}(X) | - |Q_{\e,n}(X|Y)| \notag \\
& \geq (1-\e) (nH(X_1)+o(n))- (n \frac{1}{2}(H(X_1)+H(X_1|Y_1))+o(n)) \label{pol2} \\
& \geq n \frac{1}{2}I(\mu) + o(n) \notag.
\end{align}
where \eqref{pol2} follows from the fact that a rate on the dominant face is achieved by Theorem \ref{main}, hence, the gap between $H(X_1|Y_1)$ and the achieved point on the dominant face is at least $\frac{1}{2}(H(X_1)+H(X_1|Y_1))$, since we chose appropriately the role of Alice and Bob to maximize this gap.
\end{proof}

\section*{Acknowledgement}
I would like to thank Ido Tal and Alexander Vardy for a careful reading of a first version of this manuscript \cite{corr} in 2011. I would also like to thank Ueli Maurer for stimulating discussions on secret key agreement.

\section{Appendix A: proof of Lemma \ref{comp-lemma}}\label{proofA}
Let $X,Y$ be two discrete random variables supported on $\X,\Y$ respectively. The lemma applies to $Y=f(X)$, but we write the proof for the more general setting. Upon observing $Y=y$, the estimator of $X$ which minizimes the probability of a wrong estimate is $\hat{x}(y)= \arg\max_{x \in \X} \Pr\{X=x|Y=y\}$ (i.e., MAP decoding), and the error probability is 
\begin{align}
P_e(X|Y=y):=1- \max_{x \in \X} \Pr\{X=x|Y=y\} \Pr\{Y=y\} = 1- e^{-H_\infty (X|Y=y)}.
\end{align}
Averaging over the realizations of $Y$, and using Jensen's inequality, yields an average error probability of
\begin{align}
P_e(X|Y):=1-\sum_{y \in \Y} \max_{x \in \X} \Pr\{X=x|Y=y\} \Pr\{Y=y\} \leq 1- e^{-H_\infty (X|Y)}.
\end{align}
Note that for $Y=f(X)$, where $f: \F_2^n \to \F_2^m$, hence the above can be simplified to 
\begin{align}
P_e(X|f(X))=1-\sum_{y \in \F_2^m} \,\, \max_{x \in f^{-1}(y)} \Pr(X=x) \sum_{u \in f^{-1}(y)} \Pr(X=u). 
\end{align}
If $H_\infty (X|Y) \leq \e$, then clearly $P_e(X|Y)\leq 1 - e^{-\e} = \e + o(\e)$, using MAP as the algorithm. Conversely, if there exists an algorithm that has average error probability less than $\e$, then MAP has average error probability less than $\e$ as well, i.e., 
\begin{align}
P_e(X|Y) = \sum_{y \in \Y} (1-e^{-H_\infty (X|Y=y)}) \Pr\{Y=y\} \leq \e.
\end{align}
Defining $F(y)=H_\infty (X|Y=y)$ and letting $\tau>0$, Markov's inequality implies 
\begin{align}
\Pr\{1-e^{-F(Y)} \geq \e/\tau\} \leq \tau 
\end{align}
hence, since $H_\infty(X|Y=y) \leq H_0(X|Y=y)=m$, 
\begin{align}
H_\infty (X|Y)= \E F(Y) &= \sum_{y : F(y) < 1-e^{-\e/\tau}} F(y)  \Pr\{Y=y\} + \sum_{y : F(y) \geq 1-e^{-\e/\tau}} F(y)  \Pr\{Y=y\}\\
&\leq 1-e^{-\e/\tau} + m \tau \\
&= \e/\tau + o(\e/\tau) + m \tau.
\end{align}
The conclusion is obtained by choosing $\tau = \sqrt{\e/m}$.

\section{Appendix B: proof of Lemma \ref{tech1}}\label{lemma5}
Let $m \geq 1$, and let $Z_1,\dots,Z_m$ be $m$ random variables taking each value in $\F_2$ and such that $H(Z[S]) \in \{0,1,\dots,|S|\}$ for all $S \subseteq [m]$. 
Note that an assignment of $(H(Z_1),H(Z_2),H(Z_1,Z_2))$ in $\{0,1\} \times \{0,1\} \times \{0,1,2\}$ determines an assignment of $H(X_1+X_2)$ in $\{0,1\}$. In fact, assuming $H(Z_1)=H(Z_2)=1$ (otherwise it is trivial), it must be that $H(Z_1,Z_2)$ is either 1 or 2. In the first case, $H(Z_1|Z_2)=0$ and $Z_1$ is either equal to $Z_2$ or its flip $(Z_2+1)$, i.e., $H(Z_1+Z_2)=0$. In the second case, $Z_1$ and $Z_2$ are independent and $H(Z_1+Z_2)=1$. This shows that the entropy of the sum of any two random variables is in $\{0,1\}$. Consider now $W_1=Z_1+Z_2$, $W_2=Z_3$. Note that $H(W_1,W_2)=H(Z_1+Z_2,Z_3)$ takes values in $\{0,1,2\}$. In fact,
since $H(Z_1|Z_3) \in \{0,1\}$, $Z_1$ is either a function of $Z_3$ or is independent of $Z_3$, and similarly for $Z_2$ and $Z_3$ since $H(Z_2|Z_3) \in \{0,1\}$. Hence, it must be that $H(Z_1+Z_2|Z_3)$ is either $H(Z_1+Z_2)$, $H(Z_1)$ or $H(Z_2)$, which are all in $\{0,1\}$, and $H(Z_1+Z_2,Z_3)=H(Z_1+Z_2|Z_3) + H(Z_3)$ takes value in $\{0,1,2\}$. Therefore, we can apply the same argument as above for $Z_1,Z_2$ to $W_1,W_2$ and $H(W_1+W_2)=H(Z_1+Z_2+Z_3) \in \{0,1\}$. Similarly, we get that the entropy of any sum of the $Z_i$'s is in $\{0,1\}$. 
 
By the same argument, one obtains the result with a variable in the conditioning, i.e., if $X_1,\dots,X_m$ are $m$ random variables taking each values in $\F_2$ and if $Y$ is a discrete random variable such that $H(X[S]|Y) \in \{0,1,\dots,|S|\}$ for all $S \subseteq [m]$, then $H(\sum_{i \in S} X_i|Y) \in \{0,1\}$ for all $S \subseteq [m]$. For example, the first argument in previous paragraph is as follows. If $H(X_1|Y)=H(X_2|Y)=1$ and $H(X_1,X_2|Y)=1$, then $H(X_1|X_2,Y)=0$ and $X_1$ is a deterministic function of $(X_2,Y)$, or equivalently, $X_1$ and $X_2$ can be determined form each other given $Y$. In other words, given $Y$, either $X_1=X_2$ or $X_1=1+X_2$, hence $H(X_1+X_2|Y)=0$. If instead $H(X_1|Y)=H(X_1|Y)=1$ and $H(X_1,X_2|Y)=2$, then $X_1,X_2,Y$ are mutually independent and each marginally uniform, hence $H(X_1+X_2|Y)=1$. 

Finally, $H(X[S]|Y) \in \{0,1,\dots,|S|\} \pm \e$ implies $H(\sum_{i \in S} X_i|Y) \in \{0,1\} \pm \delta(\e)$ with $\delta(\e) \to 0$ as $\e \to 0$, follows by a continuity argument and the fact that $m$ is fixed.

\end{document}